\DeclareMathOperator{\Emean}{\mathbb{E}}
\DeclareMathOperator{\X}{\rm{x}}
\DeclareMathOperator{\s}{\rm{s}}
\DeclareMathOperator{\M}{\rm{m}}
\DeclareMathOperator{\CM}{\rm{CM}}
\DeclareMathOperator{\DM}{\rm{DM}}
\DeclareMathOperator{\e}{\rm{e}}
\DeclareMathOperator{\C}{\rm{CU}}
\DeclareMathOperator{\D}{\rm{DU}}
\DeclareMathOperator{\SINR}{\Gamma}
\DeclareMathOperator{\pr}{\mathbbm{P}}
\DeclareMathOperator{\U}{\rm{u}}
\DeclareMathOperator{\ee}{\rm{E2E}}
\DeclareMathOperator{\ow}{\rm{ow}}
\DeclareMathOperator{\tti}{\rm{TTI}}
\DeclareMathOperator{\co}{\rm{CO}}
\DeclareMathOperator{\re}{\rm{re}}
\newtheorem{prop}{Proposition}
\newtheorem{theorem}{Theorem}
\newtheorem{corollary}{Corollary}
\begin{document}

\title{\Huge CoMP-Enhanced Flexible Functional Split for Mixed Services in Beyond 5G Wireless Networks}

\author{\small \IEEEauthorblockN{Li-Hsiang Shen, Yung-Ting Huang, and Kai-Ten Feng}\\
\IEEEauthorblockA{Department of Electronics and Electrical Engineering \\
National Yang Ming Chiao Tung University, Hsinchu, Taiwan\\ 
gp3xu4vu6.cm04g@nctu.edu.tw, tina344124@gmail.com, and ktfeng@nycu.edu.tw}}


\maketitle

\begin{abstract}
	With explosively escalating service demands, beyond fifth generation (B5G) aims to realize various requirements for multi-service networks, i.e., higher performance of mixed enhanced mobile broadband (eMBB) and ultra-reliable low-latency communication (URLLC) services than 5G. To flexibly serve diverse traffic, various functional split options (FSOs) are specified by 5G protocols enabling different network functions. In order to improve signal qualities for edge users, we consider FSO-based coordinated multi-point (CoMP) transmission as a prominent technique capable of supporting high traffic demands. However, due to conventional confined hardware processing capability, a processor sharing (PS) model is introduced to deal with high latency for multi-service FSO-based networks. Therefore, it becomes essential to assign CoMP-enhanced functional split modes under PS model. A more tractable FSO-based network in terms of ergodic rate and reliability is derived by stochastic geometry approach. Moreover, we have proposed CoMP-enhanced functional split mode allocation (CFSMA) scheme to adaptively assign FSOs to provide enhanced mixed throughput and latency-aware services. The simulation results have validated analytical derivation and demonstrated that the proposed CFSMA scheme optimizes system spectrum efficiency while guaranteeing stringent latency requirement. The proposed CFSMA scheme with the designed PS FFS-CoMP system outperforms the benchmarks of conventional FCFS scheduling, non-FSO network, fixed FSOs, and limited available FSO selections in open literature.

\end{abstract}

\begin{IEEEkeywords}
	5G, B5G, functional split, processor sharing, multi-services, URLLC, eMBB, CoMP, stochastic geometry, reliability, ergodic rate.
\end{IEEEkeywords}

\section{Introduction} \label{ch:Intro}

	
	With proliferation of prosperous beyond fifth generation (B5G) networks, there are urgent and diverse service demands requiring high-rate and low-latency system performances along with high capability, flexibility and dynamicality \cite{acm}. As specified by 3rd generation partnership project (3GPP) specifications, there exist three mandatory services consisting of enhanced mobile broadband (eMBB), ultra-reliable low-latency communication (URLLC) and massive machine-type communications (mMTC) \cite{multiservice}, which target on supporting high data rate, low-latency and massive number of connections for devices, respectively. Conventional 5G networks mainly aim to optimize a single service, which is independently conducted under individual data traffic. However, due to limited network resources and growing demands from enormous devices, it becomes compellingly imperative to consider mixed services such that the network can simultaneously provide multiple services while guaranteeing their corresponding requirements. 
	
\begin{figure*}[!t]
		\centering
		\includegraphics[width=5 in]{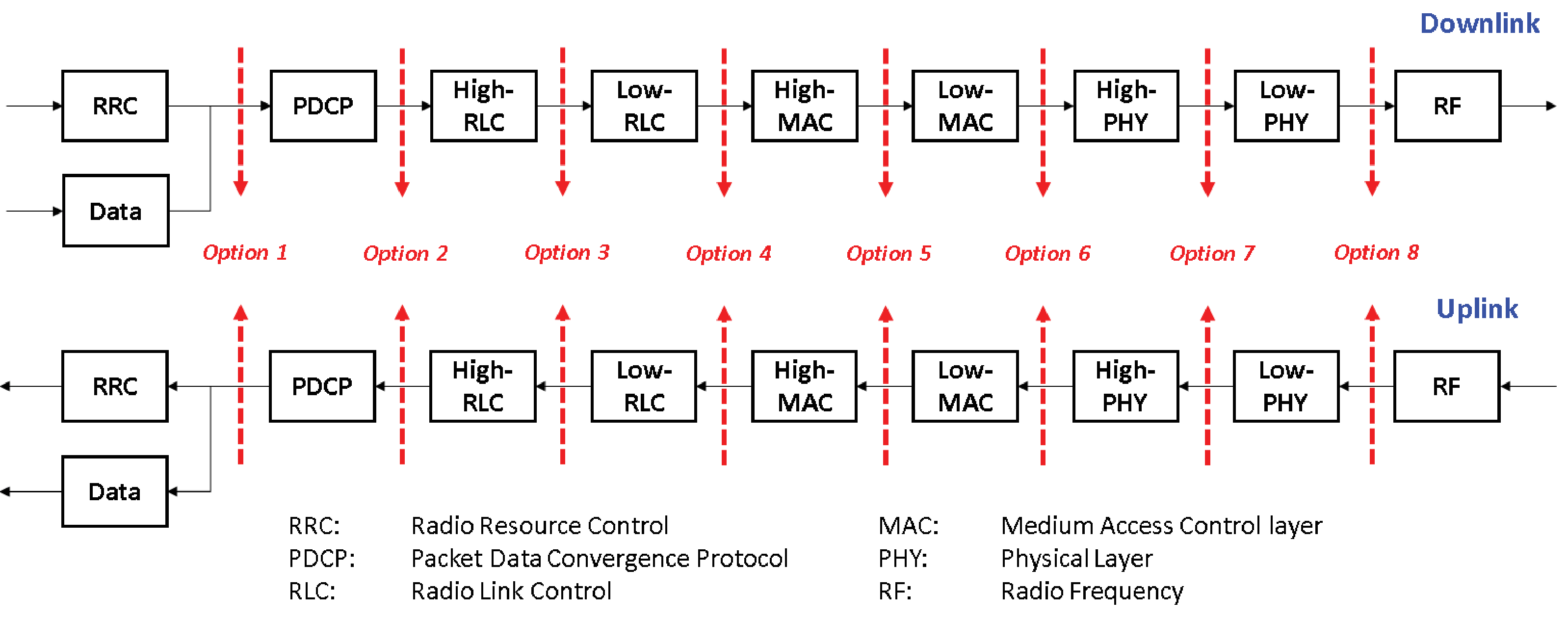}
		\caption{\small The architecture of functional split options including separation of radio resource control (RRC), packet data convergence protocol (PDCP), high-/low-radio link control (RLC), high-/low-MAC (H-/L-MAC), high-/low-PHY (H-/L-PHY) and radio frequency (RF) layers \cite{GPP, tutorial}. RF takes care for antenna port mapping, up-/down-conversion and sampling. PHY is responsible for in-phase/quadrature-phase (IQ) symbol processing, beamforming, resource element mapper, precoding/decoding as well as modulation coding schemes. MAC includes scheduler, multiplexing, channel data transfer, and retransmission mechanism. RLC aims at segmentation and buffering for transmission and retransmission. PDCP consists of ciphering, header compression and numbering. RRC supports numerous functions of system information connection control, measurement configuration and inter-radio mobility.}
		\label{fig:FS}
	\end{figure*}

	In order to support simultaneous services, it potentially requires different framework designs tailored for aggregated and separated network function layers in either centralized or distributed architecture \cite{service_tailored}. For example, joint transmission for eMBB traffic needs coordination functions among base stations (BSs), whilst queuing processing for URLLC services should be performed under efficient and reliable network functions. Therefore, functional split (FS) is designed and specified in 5G to achieve the essential needs for future service in B5G networks, which can be regarded as the revolutionary cloud radio access network \cite{GPP}. Several functional split options (FSOs) over network layers standarized by 3GPP are illustrated in an introductive manner in Fig. \ref{fig:FS}, which contains from higher network resource control layers to lower medium access control (MAC) and physical (PHY) accessing layers \cite{GPP}. The conventional BS implementing full functions has the capability to process global network information \cite{c_ran}. However, it is disadvantageous to deploy fully-functional BSs for a certain service due to its costly infrastructure and high bandwidth requirement. Accordingly, conventional architecture of BSs is partitioned into a central unit (CU) connected to several distributed units (DUs) taking the responsibility of different network functions in order to enhance the flexibility of network services. Centralized schemes tend to adaptively select partitioned options at lower layers, i.e., more functions will be executed at CU, whilst DU only takes care of simple signal processing. On the other hand, in a distributed scenario with partition at higher layers, the DUs perform comparably more network functions than those adopting centralized fashion. The authors of \cite{tutorial} have provided an overview of eight network FSOs supported as shown in Fig. \ref{fig:FS}. It states that FSOs from 5 to 8 can perform coordinated multi-point (CoMP) based joint transmission, retransmission and interference management for eMBB users. However, not all users need CoMP techniques for supporting high data rate demands, which requires dynamic adjustments for different services such as URLLC traffic demands.

	 As for an flexible functional split (FFS) system, the work load of the server is another significant factor since it is related to system operability. When the packet arrival rate is sufficiently high, i.e., high work load, it potentially provokes either performance unsatisfaction or network crash due to explosive buffers. Accordingly, it becomes important to design a novel scheduling scheme and dynamic functional split mode in order to efficiently and effectively share the network resources for a multi-service network. A table for acronyms is established in Table \ref{abbr}. The main contributions of this work are summarized as follows.

\begin{table}
\begin{center}
\scriptsize	
\setstretch{1.1}
\caption {Acronyms}
    \begin{tabular}{|l|l|}
    \hline
        Acronym & Definition \\        
        \hline\hline
		B5G	&	beyond fifth generation	\\ \hline
BS	&	base station	\\ \hline
CCDF	&	complementary cumulative distribution function	\\ \hline
CFSMA	&	CoMP-enhanced functional split mode allocation	\\ \hline
CM/DM	&	centralized/distributed mode	\\ \hline
CoMP	&	coordinated multi-point	\\ \hline
CPU	&	central processing unit	\\ \hline
CU/DU	&	central/distributed unit	\\ \hline
E2E	&	end-to-end	\\ \hline
eMBB	&	enhanced mobile broadband	\\ \hline
FCFS	&	first-come-first-serve	\\ \hline
FFS	&	flexible functional split	\\ \hline
FSO	&	functional split option	\\ \hline
MAC	&	medium access control	\\ \hline
MAR	&	mode allocation ratio	\\ \hline
mMTC	&	massive machine-type communications	\\ \hline
PHY	&	physical layer	\\ \hline
PLD	&	power level difference	\\ \hline
PS	&	processor sharing	\\ \hline
SINR	&	signal-to-interference-plus-noise ratio	\\ \hline
TTI	&	transmission time interval	\\ \hline
URLLC	&	ultra-reliable low-latency communication	\\ \hline

		\end{tabular} \label{abbr}
\end{center}
\end{table}

	\begin{itemize}
		\item We conceive an FFS framework for mixed services of throughput-oriented as well as reliability-aware applications, i.e., eMBB/URLLC, including centralized mode (CM) and distributed mode (DM), as proposed and depicted in Fig. \ref{fig:ar}. As for CM, flexible FSO is conducted at lower layers, where the FFS system enables CoMP-based joint transmissions for rate-oriented eMBB services. On the other hand, DM enabling non-CoMP transmission performs higher layer FSO than CM, where DUs have more network functions leading to potentially lower latency for URLLC services. Under the multi-service network, we have designed a resilient CoMP-enhanced FFS architecture to flexibly assign both CM and DM modes to improve mixed-service quality.
	
		\item A tractable CoMP-enhanced FFS architecture is analyzed by using stochastic geometry approach in terms of signal-to-interference-plus-noise ratio (SINR) and system ergodic rate. Based on theoretical analysis, we can determine the activation of CoMP-based joint transmission via distance-based pathloss from the DUs to associated users. As for URLLC services, we conceive a novel processor sharing (PS) scheduling scheme capable of simultaneously processing traffic data, which possesses lower latency than conventional first-come-first-serve (FCFS) method. Also, with considerations of channel/signal fluctuation, a general end-to-end (E2E) delay model is applied including CU/DU processing using PS, re-transmissions and CoMP overhead. The corresponding reliability for URLLC under the proposed PS-FFS-CoMP framework is also considered relevant to flexible FSO assignment of CM/DM.
		
		\item We have proposed a CoMP-enhanced functional split mode allocation (CFSMA) scheme to flexibly assign FSOs in CM/DM modes for mixed services, which maximizes the ergodic rate of eMBB traffic while guaranteeing URLLC reliability and system operability. The proposed CFSMA in PS and CoMP-enhanced FFS network is proved to be able to select the candidate set of CM/DM mode selection according to reliability and operability requirements. Based on the selected feasible set, it is also proven that CFSMA searches the optimal solution of power level difference (PLD) thresholds which is equivalent to assigning flexible CM/DM modes for multi-services.
		
		\item The performance evaluations validate that our analysis of proposed PS-FFS-CoMP framework asymptotically approaches the simulation results. With higher control overhead, it potentially provokes lower system ergodic rate and larger outage due to higher delay. Moreover, under compellingly high traffic loads, the system utilizing conventional FCFS method cannot fulfill reliability of URLLC users. Benefited from simultaneous processing for multi-service traffic data, the proposed PS-FFS-CoMP network is capable of offering resilient and efficient service for URLLC users while providing more computing and communication resources for improving eMBB users, which outperforms the benchmarks of conventional FCFS scheduling, non-FSO network, fixed FSOs, and limited available FSO selections in open literature.
\end{itemize}

 	The rest of this paper is organized as follows. In Section \ref{Liter}, we investigate open literature for functional split and mixed services. In Section \ref{SYS_MOD}, we provide the system model for functional split mode, PS, and CoMP transmissions. Tractable analysis and  problem formulation for CoMP-Enhanced FSS for mixed-services is demonstrated in Section \ref{TracAna}. The proposed CFSMA scheme is described in Section \ref{CHAPTER_ALGORITHM}. Performance evaluations of proposed CFSMA scheme are conducted in Section \ref{SIM_ANA}. Finally, conclusions are drawn in Section \ref{SEC_CON}.

	\begin{table*}[!t]
		\centering
		\scriptsize
		\caption {\small Comparison of Related Work}
		\begin{tabular}{lcccccccccccccc}
			\hline
			& \cite{service_ns} & \cite{GPP, user_slicing_fs} & \cite{flex_fs} &\cite{adap_fs1} &\cite{delay_constrained_fs} & \cite{PHY_fs} & \cite{multi_objective_fs} &\cite{sliced_ran_fs}  &\cite{adap_fs2} &\cite{power_control_fs} &\cite{capacity_constrained_fs} & \cite{r1, r2, r3} & \cite{r4} & This work \\ \hline \hline
			Functional Split & & \checkmark & \checkmark & \checkmark & & & \checkmark & \checkmark & \checkmark & \checkmark & \checkmark & & \checkmark & \checkmark \\ \hline
			Multi-Services & \checkmark & \checkmark & \checkmark & & & & & & & & & \checkmark & \checkmark & \checkmark \\ \hline
			eMBB Traffic & \checkmark & \checkmark & & & & & & & & & \checkmark & \checkmark & \checkmark & \checkmark \\ \hline
			URLLC Traffic & \checkmark & \checkmark & & & & & \checkmark & \checkmark  & & & & \checkmark & \checkmark & \checkmark \\ \hline
			CoMP & & & & & & & & & & & \checkmark  & & & \checkmark \\ \hline
			Packet Processing & & & & & \checkmark & & & \checkmark & & & & & \checkmark & \checkmark \\ \hline
			Rate Metric & & & \checkmark & \checkmark & & \checkmark & & & & \checkmark & & \checkmark & \checkmark & \checkmark \\ \hline
			Latency Metric & & & \checkmark & \checkmark & \checkmark & & \checkmark & \checkmark & \checkmark & & & \checkmark & \checkmark & \checkmark \\
			\hline
		\end{tabular} \label{RefComparison}
	\end{table*}

\section{Literature Review} \label{Liter}
 	In paper of \cite{service_ns}, the authors have designed a new network function split method by using node selection for different requirements in a multi-service system but with option fixed, which confines the flexibility of network services. Therefore, it becomes compellingly imperative to design flexible functional split options to efficiently utilize network resources. The work in \cite{GPP} has introduced different selections for granularity of CU/DU FSOs, i.e., CU- and DU-oriented and user-centric architecture, which possesses higher adaptivity and flexibility of multi-services for practical implementations. The work in \cite{user_slicing_fs} has implemented a user-centric FSO selection via network slicing that jointly optimizes radio, processing and wireless link resources. With the revolutionary network virtualization and software defined networks, user-centric functional split becomes possible so that multi-service packets can be multiplexed and forwarded to different users with distinct requirements \cite{flex_fs}, which has also been corroborated in \cite{adap_fs1} through experiments on an open platform.
	
	The paper in \cite{delay_constrained_fs} has investigated latency performance for URLLC-based traffic under different FSOs, which considers both processing time and work load of CU and DUs. The authors of \cite{PHY_fs} have studied different functional splits at PHY layer which aims to minimize the operation cost under certain throughput constraints. However, it is implausible to perform dynamic FSO mechanism leading to inflexibility of multi-service traffic. With network slicing in FFS, the work of \cite{multi_objective_fs} has formulated a virtual network embedding problem guaranteed by individual traffic requirements. However, multi-service requirements cannot be simultaneously fulfilled due to the fixed FSO mechanism. The paper \cite{sliced_ran_fs} has designed a joint network slicing and FSO framework targeting at maximizing centralization degree among fronthual and backhaul links between the CU and DUs. Furthermore, \cite{adap_fs2} has developed an adaptive radio access network that is capable of switching between two FSO architectures considering time-varying traffic and bandwidth-limited fronthaul. The results in \cite{power_control_fs} have revealed that the throughput can be substantially enhanced with FFS instead of fixed FSO mechanism. Moreover, the paper of \cite{capacity_constrained_fs} has compared two FSOs and stated that the mixture of FSOs can achieve power preservation considering limited fronthaul capacity.

	However, none of the aforementioned literature of \cite{multi_objective_fs, sliced_ran_fs, adap_fs1,adap_fs2, delay_constrained_fs, power_control_fs, capacity_constrained_fs, PHY_fs, r1, r2, r3, r4} considers flexible FSOs for multi-services, such as hybrid eMBB/URLLC services. Most of existing literature considers a single performance metric of either throughput for eMBB or latency for URLLC users, which are impractical and inapplicable in a real multi-service system. In papers \cite{r1, r2}, the authors have respectively proposed machine learning and robust methods to conduct resource slicing in order to solve coexisting eMBB-URLLC services optimizing both rate- and latency-aware users. A fairness problem for eMBB users while guaranteeing URLLC traffic requirement is designed in \cite{r3}. Though optimizing joint rate-latency performances, \cite{r1, r2, r3} did not consider flexible functional split mechanism. In \cite{r4}, traffic assignment is additionally taken into account, which can be regarded as a type of flexible splitting. To further enhance the user traffic demands, CoMP transmission can be leveraged into the network. Additionally, FFS-enabled CoMP-URLLC transmission has not been discussed in open literature, where joint throughput and end-to-end delay should be optimized, which is also not considered in our previous works of \cite{100,101}. Moreover, multi-DU and multiuser services over variant channels potentially deteriorates the system performance, which should be considered in FFS design. As for an FFS system, the work load of the server is another significant factor since it is related to system operability, which is not considered in existing works as well as our previous work in \cite{102}. The comparison of related works mentioned above is summarized in Table \ref{RefComparison}.

\section{System Model} \label{SYS_MOD}

 	\begin{figure*}[!t]
		\centering
		\includegraphics[width=5.5 in]{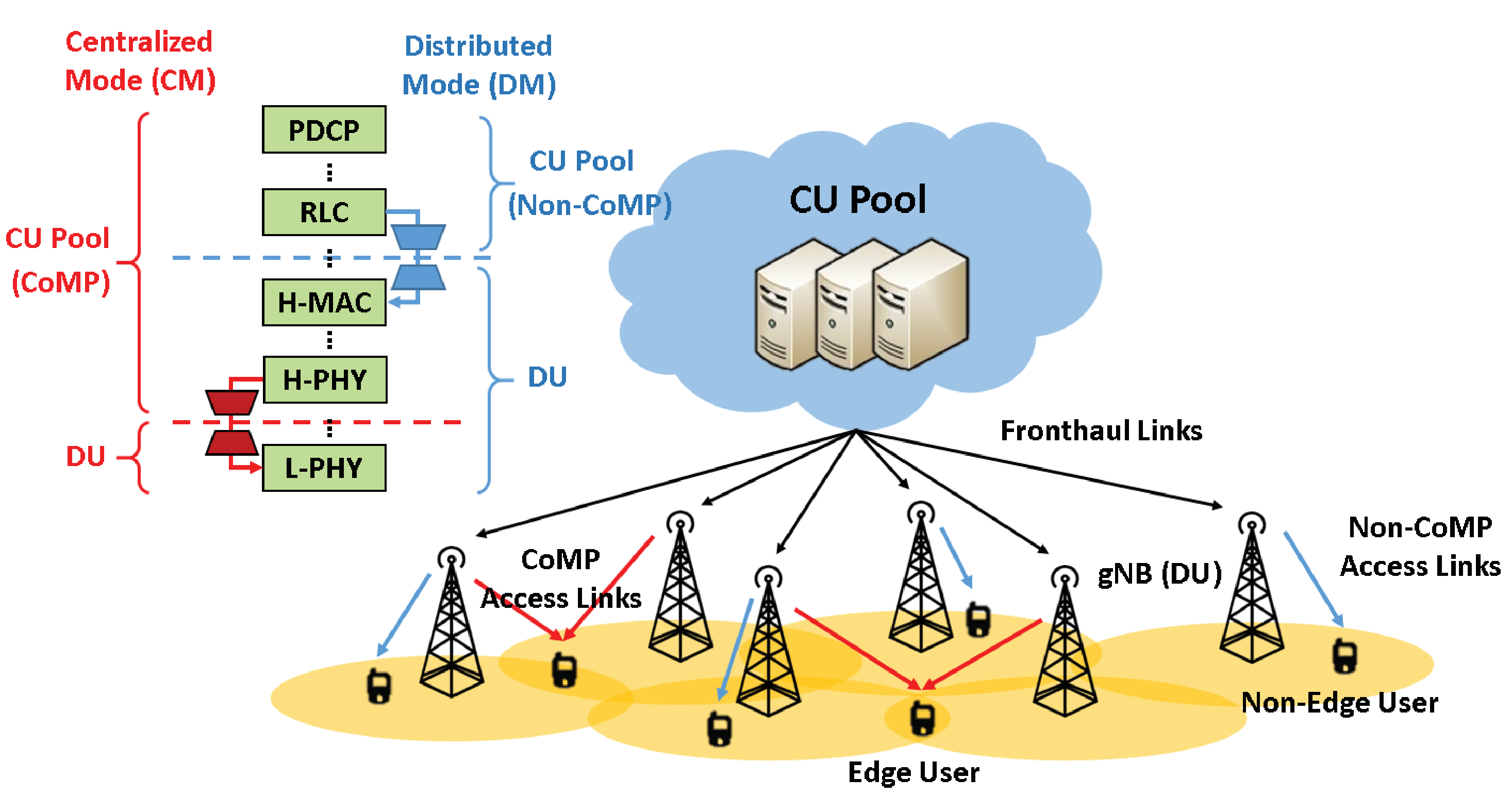}
		\caption{ System architecture of flexible functional split including CM and DM mechanisms. Note that CoMP transmission in CM is enabled via FSO partition at higher/lower PHY, whereas non-CoMP in DM is conducted via FSO separation between RLC and MAC layers.}
		\label{fig:ar}
	\end{figure*}

	\begin{figure} [t]
		\centering
		\subfigure[]
		{\includegraphics[width=2.7 in]{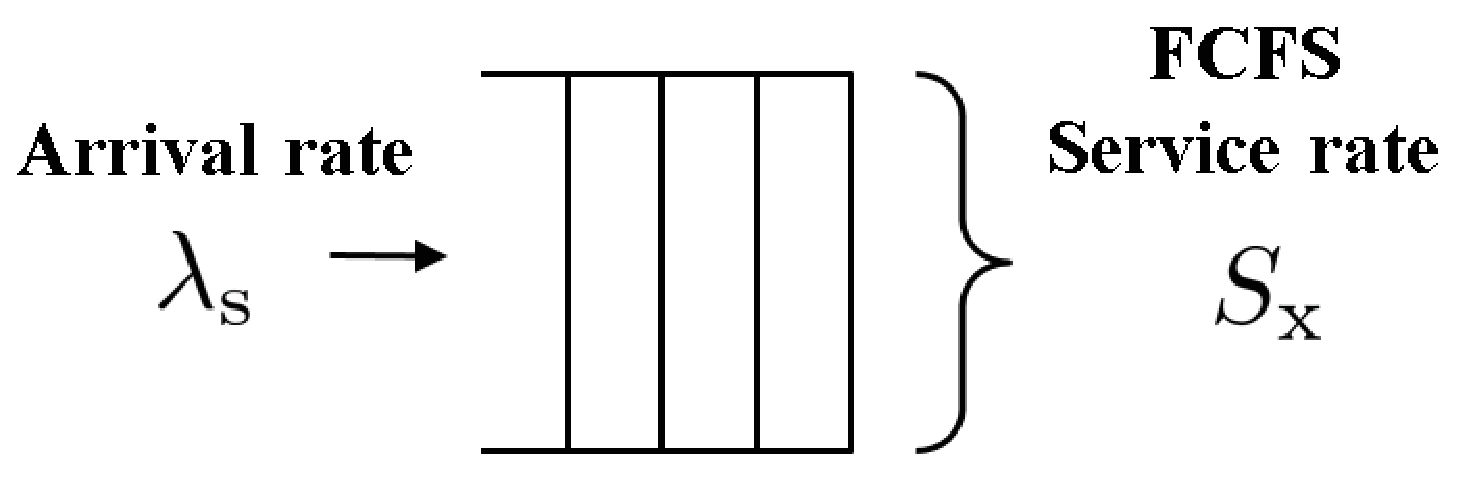} \label{fig:fcfs}}
		\subfigure[]
		{\includegraphics[width=2.7 in]{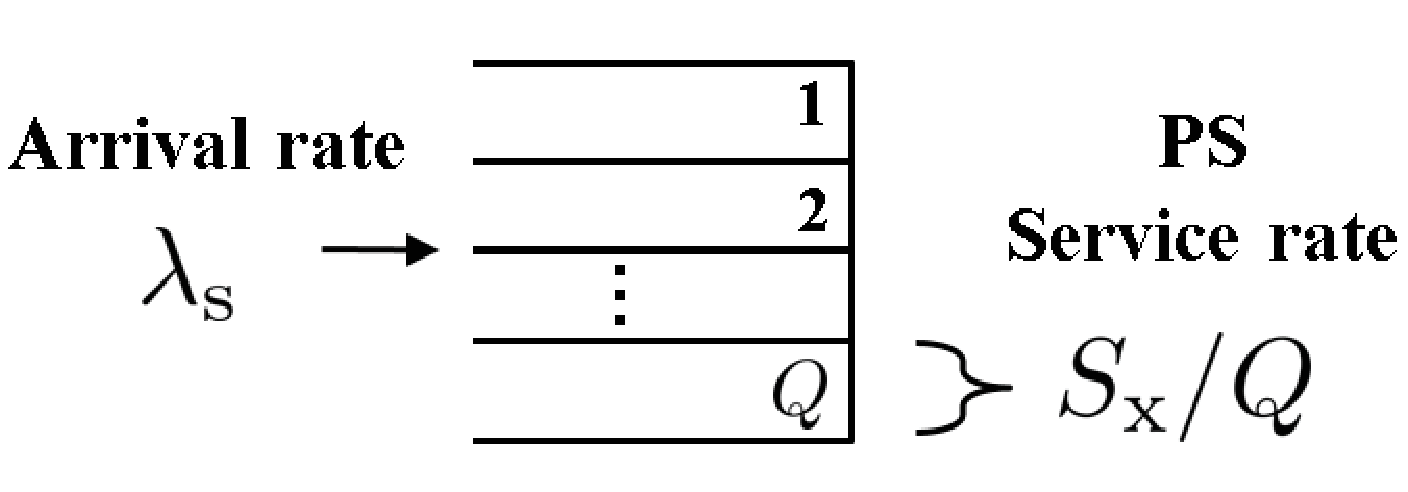} \label{fig:PS}}
		\caption {(a) Conventional FCFS scheduling method; and (b) Proposed scheduling using PS mechanism.} 
		\label{fig:ps}
	\end{figure}

\begin{table}
\begin{center}
\scriptsize	
\setstretch{1.1}
\caption {Definition of System Parameters}
    \begin{tabular}{|l|l|}
    \hline
        Parameters & Notation \\        
        \hline\hline
		Number of DUs & $N$ \\ \hline
		Density of DU and users & $\Lambda_{\D}$, $\Lambda_{\s}$ \\ \hline
		Processing server CU/DU &  $\X \in \{\C,\D\}$ \\ \hline
		eMBB/URLLC service & $\s\in \{\e,\U\}$ \\ \hline
		Serving CM/DM mode & $\M \in \{ \CM, \DM\}$ \\ \hline
		Packet arrival rate and service rate & $\lambda_{\s}$, $S_{\X}$ \\ \hline
		Number of arriving packets & $Q$ \\ \hline
		Portion of processing load & $\beta_{\M,\X}$ \\ \hline
		Distance between DU/user and its set & $r_{n}$, $\mathbf{r}$ \\ \hline
		Joint probability of $c$ nearest DUs & $b(\mathbf{r},r_{c+1})$ \\ \hline
		Pathloss exponent & $\alpha$ \\ \hline
		PLD value and PLD threshold & $v_{n}$, $\boldsymbol\gamma=\{\gamma_{\e},\gamma_{\U}\}$ \\ \hline
		Connected/Allowable number of CoMP DUs & $c$, $C$ \\ \hline
		Auxiliary PLD parameter & $B(\gamma_{\s})$ \\ \hline
		FFS MAR in mode $\M$ & $p_{\M,\s}$ \\ \hline
		SINR for CoMP transmission & $\SINR(c, \mathbf{r},\gamma_{\s})$ \\ \hline
		Small-scale fading and its mean & $h_c$, $\mu$ \\ \hline
		Total interference &  $I_{r_c}$ \\ \hline
		Noise power & $\sigma^2$ \\ \hline
		Serving coverage probability & $p_{cov}\left(T,c,\mathbf{r},\gamma_{\s}\right)$ \\ \hline
		Threshold of CCDF of SINR & $T$ \\ \hline
		Laplacian of interference $I_{r_1}$ & $\mathcal{L}_{I_{r_1}}(\mu T r_1^{\alpha})$ \\ \hline
		Auxiliary parameter for Laplacian & $\nu$ \\ \hline
		Ergodic rate for service $\s$ & $R(\s)$ \\ \hline
		Sojourn time & $\tau_{\X}(\M,\s| Q=q)$ \\ \hline
		Average sojourn time & $\tau_{\X}(\M,\s)$ \\ \hline
		Processing cycles for a packet & $l_{\s}$ \\ \hline
		Work load of server $\X$ & $\rho_{\X}$ \\ \hline
		Total users of service $\s$ & $K_{\s}$ \\ \hline
		Average serving users per DU & $\kappa_{\M}$ \\ \hline
		Serving time & $\tau_{\X}^{ser}(\M,\s)$ \\ \hline
		One-way delay & $t_{\rm{ow}}(\M,\s)$ \\ \hline
		Transmission time interval & $t_{\tti}(\s)$ \\ \hline
		Control overhead in mode $\M$ & $t_{\co}(\M)$ \\ \hline
		End-to-end delay and requirement & $t_{\ee}(\M,\s)$, $t_{\U}^{(max)}$ \\ \hline
		Duration of retransmission & $t_{\re}$ \\ \hline
		Number of failure transmissions & $N_{f}$ \\ \hline
		Maximum allowable retransmission & $N_{f,\M,\s}^{(max)}$ \\ \hline
		Successful transmission probability & $p_{succ}(\M,\s)$ \\ \hline
		Reliability requirement & $R(\e)$ \\ \hline
		PLD set for reliability/operability & $\boldsymbol{ \gamma}^{\mathcal{R}}$, $\boldsymbol{ \gamma}^{\mathcal{O}}$ \\ \hline	
		Feasible PLD candidate set and its optimum & $\boldsymbol{\gamma}^{\mathcal{F}}$, $\boldsymbol{\gamma}^{*}$ \\ \hline
		Auxiliary parameters for theorems & $\mathcal{A}_{\M}$, $\mathcal{S}_1$, $\mathcal{S}_2$, $G(x)$ \\ \hline
		\end{tabular} \label{Parameters}
\end{center}
\end{table}

\subsection{Functional Split Architecture}

	As shown in Fig. \ref{fig:ar}, we consider an FFS network $\Psi$ consisting of a single CU, $N$ DUs serving multiusers. Both DUs and users follow the distribution of Poisson point process (PPP) with densities denoted as $\Lambda_{\D}$ and $\Lambda_{\s}$, respectively. Note that poor channel quality occurs at edge users, and vice versa for non-edge users. We define the processing server as $\X \in \{\C,\D\}$ for CU and DU with specific service $\s\in \{\e,\U\}$ for throughput-oriented eMBB and reiliabity and latency-aware URLLC traffic, respectively. The packet arrival rate of users with service $\s$ is denoted as $\lambda_{\s}$. We consider that the transmission of desirable data packets are experienced through wired fronthaul between the CU and DUs and access links from DUs to serving users. Furthermore, a hybrid service of mixed throughput-/latency-aware traffic is considered with these two services possessing substantially different network requirements, i.e., eMBB is for rate-oriented services, whilst URLLC aims at low-latency and time-critical communications\textsuperscript{\ref{note0}}\footnotetext[1]{Based on the specification of 5G new radio \cite{num1}, time-critical communications can be supported under microwave with numerology types 1 and 2. We can use higher-order numerology in above-6 GHz to meet the requirement of diverse applications, which however induces further issues, such as narrow-beam training and beam misalignment requiring additional complex modeling and designs. \label{note0}}. Without loss of generality, the packet size of eMBB service is comparably larger than that of URLLC traffic. As shown in Fig. \ref{fig:ps}, we demonstrate the conventional processing and the proposed PS mechanism inspired by \cite{PS}. When adopting conventional FCFS scheduling method as depicted in Fig. \ref{fig:fcfs}, latency requirement is potentially not satisfied due to longer queuing waiting time. Furthermore, congestion of eMBB packets may be induced by prioritized latency-aware packets. Accordingly, to deal with above-mentioned problem, we consider a PS scheduling scheme which is a preferable solution to simultaneously achieve stringent rate and latency requirements. As shown in Fig. \ref{fig:PS}, the main concept of PS is to equally assign the segmented queue capacity to different data streams, where the service rate is shared to all packets in the server. That is, arriving data enters the service immediately with no line to await. We define the service rate at FCFS as $S_{\X}$ for $\X$ processing server, therefore the service rate in PS is divided by $Q$ arriving packets resulting in a processing rate of $S_{\X}/Q$. Accordingly, the experienced queuing delay of PS can be significantly mitigated compared to that of FCFS scheduling. In other words, latency-aware packets under PS shall experience much smaller queueing delay since it will be processed instantly. 

	In the considered FFS network, two functional split modes are supported, i.e., CM and DM, where the discrepancy is whether layers from high-PHY to high-MAC are deployed at either CU or DU as shown in Fig. \ref{fig:ar}. We denote the serving mode $\M \in \{ \CM, \DM\}$ for CM and DM, respectively. As illustrated in Fig. \ref{fig:ar}, CM enables CoMP transmission by partitioning FSO at high- and low-PHY layers, whilst DM for non-CoMP separates its network functions at RLC and MAC layers. Referring from 3GPP specification TR 38.801 \cite{GPP}, due to the fact that high-PHY is in control of signal precoding selection, the users can thus be served by CoMP based joint transmission in CM. By contrast, DUs in DM cannot serve CoMP services since both MAC and PHY layers are deployed at DUs, that is, users can only be served by a single DU. We consider that packets experience the same process in both CM and DM modes from the CU to DUs. Therefore, the following equality holds
	\begin{equation}
	\beta_{\M,\C}+\beta_{\M,\D}=1, \forall \M, \label{proload}
	\end{equation}
	where $\beta_{\M,\X}$ denotes the portion of processing load allocated to server $\X$ in mode $\M \in \{ \CM, \DM\}$, and $\beta_{\M,\X} \in (0,1)$. Note that when packets are operated in CM, the CU will have comparably higher processing load than that in DUs, i.e., the inequality $\beta_{\CM,\C}>\beta_{\DM,\C}$ is satisfied. A symbol table is listed in Table \ref{Parameters}.
	

\subsection{CoMP-Enhanced FFS}

	As mentioned previously, in CM for FFS network, DUs are capable of serving users with impoverished channels using CoMP transmissions, i.e., signal quality can be substantially improved. Note that each DU can flexibly decide the mode to meet individual service requirement. We consider a PLD method to determine the necessity of CoMP transmissions \cite{joint}, which implicitly indicates the quality of corresponding channel. We assume that the serving DU will serve its nearest users, however, other non-serving DUs will provoke interferences to desirable users deteriorating rate performance. Therefore, the central CU will convert interferences from the other DUs into CoMP based signals if the estimated received power of PLD from the other DUs are sufficiently high to menace desired transmission. In other words, the user with low signal quality will be served by more than one DU enabling CoMP mechanism. We define $\mathcal{C}$ as the serving DU set of the reference user. We consider that all DUs are sorted and indexed based on increments of distance from DUs to reference user, and therefore the first element of $\mathcal{C}$ denotes the nearest DU providing the strongest signal and acting as the mandatory serving DU. Moreover, the PLD between the nearest DU and DU $n$ is defined as $\upsilon_{n} =\frac{{r}_{1}^{-\alpha}}{r_{n}^{-\alpha}}$, where $\alpha$ is the pathloss exponent and $r_n$ is the distance between the reference user and the $n$-th nearest DU. Note that inspired from stochastic geometry theory under randomly distributed users, small-scale channel fading (appropriate for instantaneous transmission model) is averaged, whilst only distance-based large-scale fading will significantly influence the channel quality \cite{pld_comp} and accordingly system performance. Based on PLD method, the strategy of DU $n\in \mathcal{C}$ to serve the reference user under service $\s$ can be determined as
\begin{equation}
\D n \in \mathcal{C}, \text{if } \upsilon_{n} \leq \gamma_{\s},
\label{eq:du_assignment}
\end{equation}
where $\gamma_{\s}$ is PLD threshold of service $\s$. Notice that PLD threshold $\gamma_{\s}$ plays an important role to decide the feasible serving DU set and operating mode for users. We consider that the reference user can associate with $c$ DUs, where $c$ satisfies $c \leq C \leq N$ with $C$ indicating the maximum allowable number of candidate CoMP DUs. Therefore, the probability that reference user connects to $c$ DUs at the same time can be computed as
\begin{align}
\pr\left(|\mathcal{C}|=c\right) 
&= \underbrace{\int_{r_1=0}^\infty \int_{r_2=r_1}^{B(\gamma_{\s})r_1} ... \int_{r_{c}=r_{c-1}}^{B(\gamma_{\s})r_1}}_{\text{Candidate CoMP DU}} \underbrace{\int_{r_{c+1}=r_c}^\infty}_{\text{Non-CoMP DU}} 
 b(\mathbf{r},r_{c+1}) dr_{c+1}dr_{c}...dr_1 
\notag \\ &\overset{(a)}{=} {\int_{r_1=0}^\infty \int_{r_2=r_1}^{B(\gamma_{\s})r_1} ... \int_{r_{c}=r_{c-1}}^{B(\gamma_{\s})r_1}} {\int_{r_{c+1}=B(\gamma_{\s})r_1}^\infty}  b(\mathbf{r},r_{c+1}) dr_{c+1}dr_{c}...dr_1,
\label{eq:c_du}
\end{align}
where the notation $|\cdot|$ denotes cardinality of the given set. Notation $B(\gamma_{\s})=\gamma_{\s}^{\frac{1}{\alpha}}$ is obtained from the PLD threshold in $\eqref{eq:du_assignment}$ indicating whether DU $n$ is selected as one of elements in the candidate CoMP set. In $\eqref{eq:c_du}$, it includes candidate CoMP and non-CoMP DUs which vary in upper limits of integrals. The equality (a) holds since the lower limit of the last integral for non-CoMP DU should satisfy the PLD constraint, i.e., the non-CoMP DU is out of range of CoMP service. Consider the example illustrated in Fig. \ref{fig:gamma2FSM}, for users in CM, the secondary nearest DU satisfying the PLD condition of $\eqref{eq:du_assignment}$ can be selected as one of CoMP candidates. On the other hand, DM user is not served by CoMP transmission due to unsatisfaction of PLD threshold. Moreover, the joint probability of $c$ nearest DUs can be obtained from  \cite{distance} as
\begin{equation}
	b(\mathbf{r}) = b(r_1,r_2,...,r_{c}) = e^{-\pi\Lambda_{\D} \sum_{i=1}^{c} r_{i}^{2}} \cdot \left(2\pi\Lambda_{\D}\right)^{c} \cdot \prod_{i=1}^{c} r_{i},
\label{eq:PPP}
\end{equation}
where $\mathbf{r}$ represents the set of $c$ nearest DUs deployed at the distance from $r_1$ to $r_{c}$, i.e., $\mathbf{r}= \{r_{i}, \forall 1\leq i \leq c\}$. Note that probability of $b(\mathbf{r},r_{c+1})$ in $\eqref{eq:c_du}$ can be obtained in a similar manner in $\eqref{eq:PPP}$ with additional term of $r_{c+1}$. Thus, according to $\eqref{eq:c_du}$, for arbitrary user belonging to service $\s$, we can define the FFS mode allocation ratio (MAR) in CM and DM respectively as
\begin{subequations}
	\begin{align}
&p_{\CM,\s}=\sum_{c=2}^C \pr\left(|\mathcal{C}|=c\right), \\
&p_{\DM,\s}=\pr\left( |\mathcal{C}|=1\right).
	\end{align} \label{eq:C is 2}
\end{subequations}
We can observe from $\eqref{eq:C is 2}$ that PLD threshold $\gamma_{\s}$ potentially affects the number of CoMP-enabled DUs, which alternatively influences the processing functionality of CU and DUs in different modes of CM and DM. Therefore, it becomes compellingly imperative to determine the optimal PLD to meet the requirement of multi-services in a CoMP-enhanced FFS network.

	\begin{figure}
	\centering
	\includegraphics[width=4in]{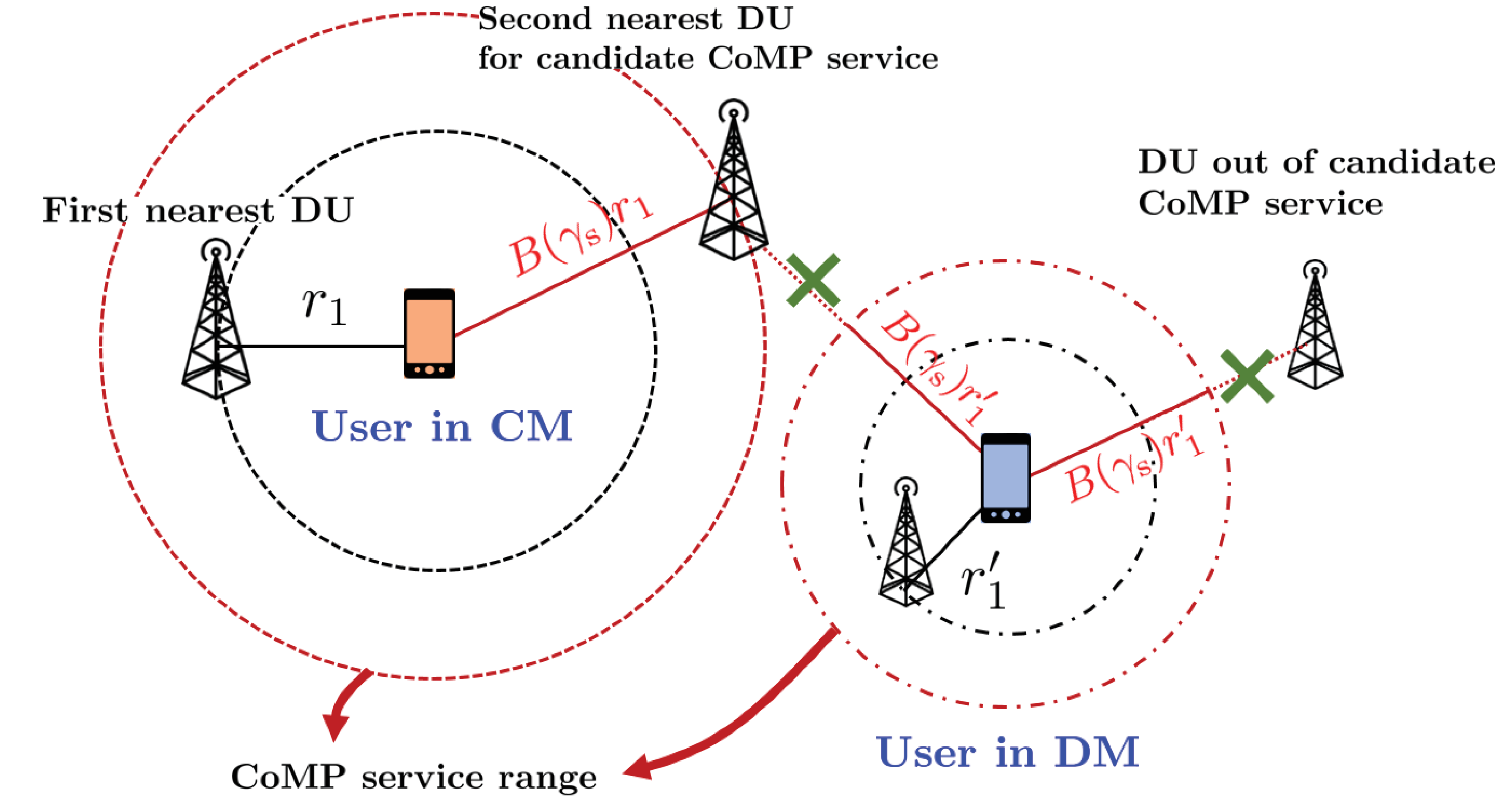}
	\caption{Illustration of FFS CM/DM mode selection. For the user in CM, the second nearest DU is clustered as the candidate CoMP service which should satisfy PLD threshold in CoMP service with range as $B(\gamma_{\s})r_{1}$. On the other hand, the second and third nearest DUs out of CoMP service range with $B(\gamma_{\s})r'_{1}$ will not serve the user, which becomes DM mode.}
	\label{fig:gamma2FSM}
\end{figure}

\section{Tractable Analysis for CoMP-Enhanced FSS for Mixed Services} \label{TracAna}

\subsection{SINR and Throughput Analysis}

We consider a wireless scenario of omnidirecitonal downlink transmission. Accordingly, for the reference user connected to $c$ DUs located at distances from $r_1$ to $r_c$ with PLD threshold $\gamma_{\s}$, the SINR for CoMP transmissions is defined as
\begin{equation}
\SINR(c, \mathbf{r},\gamma_{\s}) = \frac{h_1G_{1}r_1^{-\alpha}+...+h_{c}G_{c}r_{c}^{-\alpha}}{I_{r_{c}}+\sigma^2},
\label{eq:sinr}
\end{equation}
where $h_{c}$ is the small-scale Rayleigh channel fading of the $c$-th nearest DU following exponential distribution with its mean of $\mu$. $G_{c}$ is the antenna gain of DU $c$. The notation $I_{r_{c}}=\sum_{c'=c+1}^{C} h_{c'}G_{c'}r_{c'}^{-\alpha}$ is total interference from the non-serving DUs outside the distance $r_{c}$. Note that equal power and an equivalent constant antenna gain from DUs is considered for derivation simplicity, where $G_c$ and $G_{c'}$ are neglected in the following derivations\textsuperscript{\ref{note1}}\footnotetext[2]{We consider equal power as a baseline case compared to power control based method. In our FFS system, it would require further complex scheme when coupled rate and latency factors are considered, which can be extended as a future work inspired by the power control methods under a simplified network in \cite{r1,r5}. Moreover, though we have neglected antenna gain model here due to its constant value consideration, this work can be extended to directional beamforming in either sub-6 or above-6 GHz transmissions. It requires additional complex beam modeling and analysis under different antenna patterns and beamforming techniques \cite{my1} as future inspiring research issues. \label{note1}}. $\sigma^2$ is the normalized noise power. We consider identical transmit power of DUs in $\eqref{eq:sinr}$ which is normalized within $\sigma^{2}$. Also, the transmission SINR model is generally used in both analyses of (a) eMBB throughput and (b) URLLC delay and reliability performances in the following sections. Therefore, the serving coverage probability can be defined as the complementary cumulative distribution function (CCDF) of SINR which is given by
\begin{equation}
p_{cov}\left(T,c,\mathbf{r},\gamma_{\s}\right)= \pr\left(\SINR(c,\mathbf{r}, \gamma_{\s})\geq T\right),
\label{eq:pc_general}
\end{equation}
where $\SINR\left(c, \mathbf{r}, \gamma_{\s}\right)$ is the SINR from $c$ DUs with PLD threshold $\gamma_{\s}$, and $T$ indicates the threshold of CCDF of SINR. We can therefore observe from $\eqref{eq:pc_general}$ that coverage probability is relevant to CoMP parameters with respect to the number of serving DUs and PLD threshold. Based on conditional probability property, we first analyze SINR in $\eqref{eq:sinr}$ under given $\gamma_{\s}$ which is represented by $\eqref{eq:pc_condition}$ as shown at top of this page.
\begin{figure*}[!t]
\begin{align}
	&\pr\Big(\SINR(c, \mathbf{r},\gamma_{\s}) \geq T\Big) = \pr\left( \frac{h_1r_1^{-\alpha}+...+h_{c}r_{c}^{-\alpha}}{I_{r_{c}}+\sigma^2}\geq T \right) \notag\\
	&\overset{(a)}{=} \int_{T'r_1^\alpha}^{\infty} \pr \left( h_1 = h_1'\right) \cdot \mathbbm{E}_{I_{r_{c}}} \left[ \pr \left( \sum_{i=2}^c h_{i}r_{i}^{-\alpha} \geq T'-h_1r_1^{-\alpha} \Big| h_1=h_1'\right) \right] dh_1' \notag\\
	&\overset{(b)}{=}  \int_{T'r_1^\alpha}^{\infty} \! \pr\left( h_1 \! = \! h_1' \right) \int_{\left( T' \!-\! h_1r_1^{-\alpha}\right) r_2^{\alpha}}^{\infty} \! \pr \left( h_2 \! = \! h_2'\right) \! \cdot \mathbbm{E}_{I_{r_{c}}} \left[ \pr \left( \sum_{i=3}^c h_{i}r_{i}^{-\alpha} \! \geq \! T'\!-\!\sum_{j=1}^2 h_j' r_{j}^{-\alpha} \right) \right] dh_2'dh_1' \notag\\
	& \overset{(c)}{=} \int_{T'r_1^\alpha}^{\infty} \int_{\left(T'-h_1r_1^{-\alpha}\right)r_2^{\alpha}}^{\infty} ... \int_{ \! \left( \! T' - \sum\limits_{i=1}^{c-1} h_{i}r_{i}^{-\alpha} \! \right) \! r_{c}^{\alpha}}^{\infty} \prod_{j=1}^c \pr\left( h_{j} \!=\! h_{j}'\right) dh_{c}'...dh_1'
\label{eq:pc_condition}
\end{align}
\hrulefill
\end{figure*}
In $\eqref{eq:pc_condition}$, (a) holds due to the independence of small scale channel fading with auxiliary parameter of $T'=T(I_{r_{c}}+\sigma^2)$. Based on marginal probability property, the equality (a) is derived indicating the integral of channel distribution of the first nearest DU. Therefore, in a similar transformation method, the inner integral of equality (b) implies the integral of the second nearest DU given the first one. Accordingly, a general expression of CCDF of SINR via equality (c) is derived for $c$ CoMP serving DUs. As mentioned previously in $\eqref{eq:c_du}$, the probability of the number of connected DUs is relevant to the PLD threshold $\gamma_{\s}$, i.e., distance of $\mathbf{r}=\{r_{i} | \forall 1\leq i\leq c\}$ is related to $\gamma_{\s}$. Thus, combining joint distribution of $\eqref{eq:c_du}$ and $\eqref{eq:pc_condition}$ yields a general form $\eqref{eq:pc_general}$ for $c$ FFS-based CoMP DUs as
\begin{equation}
\begin{aligned}
p_{cov}(T,c,\mathbf{r},\gamma_{\s}) =
\int_{r_1=0}^\infty \int_{r_2=r_1}^{B(\gamma_{\s})r_1} ... \int_{r_{c}=r_{c-1}}^{B(\gamma_{\s})r_1} \int_{r_{c+1}=B(\gamma_{\s})r_1}^\infty \pr\Big( \SINR(c, \mathbf{r},\gamma_{\s}) \!\geq \! T \Big)
b(\mathbf{r},r_{c+1}) dr_{c+1}dr_{c}...dr_1.
\end{aligned}
\label{eq:pc_gamma}
\end{equation}
Due to difficulties to infer substantial insights from $\eqref{eq:pc_gamma}$, we consider an implementable case for FFS-based CoMP, i.e., maximum number of $C=2$ DUs to serve a single user\textsuperscript{\ref{note2}}\footnotetext[3]{As for CoMP transmission in CM, there induces overhead of synchronization and channel information exchange. The mechanism becomes more complex with comparably higher overhead with the increment of participating DUs, which also leads to no closed-form solution for theoretical analysis under mixed services. However, it is capable of providing a closed-form solution when $C=2$, whilst the more flexible network with $C>2$ is an open issue as future works \label{note2}}. Note that according to the definition of FFS network, we only have CoMP transmission for mode CM and conventional single link for non-CoMP service in DM. Therefore, we can have the following propositions for DM and CM, respectively.

\begin{prop} \label{prop1}
For an FFS-enabled network, the serving coverage probability as CCDF of SINR in the DM of non-CoMP transmission can be derived as
\begin{align} \label{cov1}
	p_{cov}(T,1,\mathbf{r},\gamma_{\s}) &= \int_{r_1=0}^\infty \int_{r_2=B(\gamma_{\s})r_1}^\infty e^{-\mu T r_1^\alpha \sigma^2} \cdot \mathcal{L}_{I_{r_1}}(\mu T r_1^\alpha) b(r_1,r_2) dr_2dr_1,
\end{align}
where $\mathcal{L}_{I_{r_1}}(\mu T r_1^{\alpha})=\mathbbm{E}_{I_{r_1}}\left[ e^{-\mu T r_1^{\alpha} I_{r_1}}\right]$ is Laplace transform of interference $I_{r_1}$ for the first nearest DU.
\end{prop}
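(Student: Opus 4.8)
The plan is to specialize the general coverage-probability expression \eqref{eq:pc_gamma} to the single-DU case $c=1$ and then evaluate the conditional SINR tail probability in closed form using the exponential law of the desired channel gain. Setting $c=1$ in \eqref{eq:pc_gamma}, the candidate-CoMP integrals vanish and only the outer integral over $r_1$ together with the non-CoMP integral over $r_2 \in [B(\gamma_{\s})r_1, \infty)$ survive, so that $p_{cov}(T,1,\mathbf{r},\gamma_{\s}) = \int_{r_1=0}^\infty \int_{r_2=B(\gamma_{\s})r_1}^\infty \pr\big(\SINR(1,\mathbf{r},\gamma_{\s})\geq T\big)\, b(r_1,r_2)\, dr_2\, dr_1$, with $b(r_1,r_2)$ taken from \eqref{eq:PPP}. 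It then remains to show that the inner conditional probability equals $e^{-\mu T r_1^\alpha \sigma^2}\,\mathcal{L}_{I_{r_1}}(\mu T r_1^\alpha)$, which I would establish separately and substitute back.

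First I would rewrite the SINR threshold event. From \eqref{eq:sinr} with $c=1$ we have $\SINR(1,\mathbf{r},\gamma_{\s}) = h_1 r_1^{-\alpha}/(I_{r_1}+\sigma^2)$, so $\{\SINR \geq T\}$ is equivalent to $\{h_1 \geq T r_1^\alpha (I_{r_1}+\sigma^2)\}$. This is precisely the $c=1$ instance of equality (c) in \eqref{eq:pc_condition}, where the nested tower of integrals collapses to the single integral $\int_{T'r_1^\alpha}^\infty \pr(h_1 = h_1')\,dh_1'$ with the auxiliary parameter $T' = T(I_{r_1}+\sigma^2)$. Next I would invoke the Rayleigh-fading assumption: the desired gain $h_1$ has complementary CDF $\pr(h_1 \geq t) = e^{-\mu t}$. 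Conditioning on the interference $I_{r_1}$ and substituting $t = T r_1^\alpha(I_{r_1}+\sigma^2)$ gives $\pr\big(h_1 \geq T r_1^\alpha(I_{r_1}+\sigma^2)\mid I_{r_1}\big) = e^{-\mu T r_1^\alpha \sigma^2}\, e^{-\mu T r_1^\alpha I_{r_1}}$, where I have factored out the deterministic noise term. Taking the expectation over $I_{r_1}$ and pulling the deterministic factor outside yields $e^{-\mu T r_1^\alpha \sigma^2}\, \mathbbm{E}_{I_{r_1}}\big[e^{-\mu T r_1^\alpha I_{r_1}}\big]$, and by the definition of the Laplace transform stated in the proposition the expectation is exactly $\mathcal{L}_{I_{r_1}}(\mu T r_1^\alpha)$. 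Substituting this back into the two-fold integral above produces \eqref{cov1}.

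The one step deserving care is the interchange of the conditional-on-$I_{r_1}$ probability with the expectation over the interference, i.e. justifying that the desired gain $h_1$ and the aggregate interference $I_{r_1}$ may be treated independently so that the tail probability conditioned on $I_{r_1}$ can be averaged termwise. This relies on the independence of the small-scale fading coefficients noted after \eqref{eq:pc_condition}, and it is what lets the noise-dependent and interference-dependent exponentials separate cleanly into a deterministic factor $e^{-\mu T r_1^\alpha \sigma^2}$ and the interference Laplace transform. Everything else is a routine collapse of the $c=1$ integrals in \eqref{eq:pc_gamma} together with an elementary exponential-tail evaluation, so no further technical machinery beyond \eqref{eq:PPP} and the exponential-gain assumption is required.
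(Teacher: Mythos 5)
Your proposal is correct and follows essentially the same route as the paper's own proof: specialize \eqref{eq:pc_gamma} to $c=1$, use the exponential tail of $h_1$ conditioned on $I_{r_1}$ to factor the noise term out as $e^{-\mu T r_1^\alpha \sigma^2}$, and identify the remaining expectation as the Laplace transform $\mathcal{L}_{I_{r_1}}(\mu T r_1^\alpha)$ before substituting back into the double integral. Your added remark on the independence of $h_1$ and $I_{r_1}$ justifying the conditioning step is a harmless elaboration of what the paper leaves implicit.
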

\begin{proof}
For a non-CoMP transmission, we substitute $c=1$ in $\eqref{eq:pc_gamma}$ to obtain the corresponding coverage probability as
\begin{align}
	p_{cov}(T,1,\mathbf{r},\gamma_{\s}) 
	&= \int_{r_1=0}^\infty \int_{r_2=B(\gamma_{\s})r_1}^\infty \pr\Big(\Gamma(1, \mathbf{r},\gamma_{\s}) \geq T\Big) b(r_1,r_2) dr_2dr_1 \notag\\
	&= \int_{r_1\!=\!0}^\infty \int_{r_2\!=\!B(\gamma_{\s})r_1}^\infty \pr\Big( h_1r_1^{-\alpha} \!\geq\! T(I_{r_{1}}\!+\!\sigma^{2})\Big) b(r_1,r_2) dr_2dr_1.
\label{eq:pc_special1}
\end{align}
By considering the Rayleigh fading channel with exponential distribution, we can further acquire the following equation as
\begin{align} \label{eq:pc_special11}
	\pr\Big( h_1r_1^{-\alpha} \geq T(I_{r_{1}}+\sigma^{2})\Big) &\overset{\underset{\mathrm{(a)}}{}}{=} \mathbbm{E}_{I_{r_{1}}} \left[ e^{-\mu r_{1}^{\alpha}T(I_{r_{1}}+\sigma^{2})}\right] \overset{\underset{\mathrm{(b)}}{}}{=} e^{-\mu T r_1^\alpha \sigma^2} \mathcal{L}_{I_{r_1}}(\mu T r_1^\alpha),
\end{align}
where (a) follows the exponential distribution of $h_{1}$ with the mean of $\mu$, and (b) defines the Laplace transform of $\mathcal{L}_{I_{r_1}}(x)=\mathbbm{E}_{I_{r_1}}\left[ e^{-xI_{r_1}}\right]$ of $I_{r_1}$ for the first nearest DU \cite{stochastic}. We can obtain the result of $\eqref{cov1}$ by substituting $\eqref{eq:pc_special11}$ into $\eqref{eq:pc_special1}$ which completes the proof.
\end{proof}

\begin{prop} \label{prop2}
For an FFS-based CoMP transmission, the serving coverage probability as CCDF of SINR in CM by considering $c=2$ CoMP DUs can be obtained as
\begin{align} \label{cov2}
	p_{cov}(T,2,\mathbf{r},\gamma_{\s}) \!=\! \int_{r_1=0}^\infty \int_{r_2=r_1}^{B(\gamma_{\s})r_1} \left[ \left(1\!+\!\nu \right)e^{-\mu T r_1^\alpha \sigma^2} \mathcal{L}_{I_{r_2}}(\mu T r_1^\alpha)  \!-\!\nu e^{-\mu T r_2^\alpha \sigma^2} \mathcal{L}_{I_{r_2}}(\mu T r_2^\alpha) \right] b(r_1,r_2) dr_2 dr_1,
\end{align}
where $\nu=\frac{1}{(\frac{r_1}{r_2})^{-\alpha}-1}$, and $\mathcal{L}_{I_{r_2}}(\mu T r_1^\alpha)$ and $\mathcal{L}_{I_{r_2}}(\mu T r_2^\alpha)$ are Laplace transform expressions of $I_{r_2}$ for the second nearest DU.
\end{prop}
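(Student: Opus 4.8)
The plan is to specialize the general coverage expression $\eqref{eq:pc_gamma}$ to $c=2$ and reduce the conditional event to the tail of a weighted sum of two independent fading powers. Setting $c=2$, the integrand becomes $\pr\big(h_1 r_1^{-\alpha}+h_2 r_2^{-\alpha}\geq T(I_{r_2}+\sigma^2)\big)$ with $r_2$ ranging over the CoMP window $(r_1,B(\gamma_{\s})r_1)$. Conditioning on the interference and introducing the auxiliary threshold $T'=T(I_{r_2}+\sigma^2)$ exactly as in $\eqref{eq:pc_condition}$, the core quantity to evaluate is the CCDF of $h_1 r_1^{-\alpha}+h_2 r_2^{-\alpha}$, where the two summands are independent exponentials with rates $\mu r_1^\alpha$ and $\mu r_2^\alpha$, respectively (distinct because $r_1<r_2$ strictly inside the window).

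I would then carry out the nested integral of $\eqref{eq:pc_condition}$ for $c=2$ in closed form. Because the two summands are independent exponentials with unequal rates, their sum is hypoexponential and its tail evaluates to $(1+\nu)\,e^{-\mu r_1^\alpha T'}-\nu\,e^{-\mu r_2^\alpha T'}$. The coefficient bookkeeping is the step I would treat most carefully: the partial-fraction weight produced by the integration is $r_1^\alpha/(r_2^\alpha-r_1^\alpha)$, and I would verify the identities $\nu=\frac{r_1^\alpha}{r_2^\alpha-r_1^\alpha}=\frac{1}{(r_1/r_2)^{-\alpha}-1}$ together with $1+\nu=\frac{r_2^\alpha}{r_2^\alpha-r_1^\alpha}$, so that the two exponential terms acquire precisely the coefficients stated in $\eqref{cov2}$.

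With the conditional-on-$I_{r_2}$ expression in hand, I would substitute $T'=T(I_{r_2}+\sigma^2)$, split each exponential into a deterministic noise factor $e^{-\mu T r_1^\alpha\sigma^2}$ (respectively $e^{-\mu T r_2^\alpha\sigma^2}$) and a random factor $e^{-\mu T r_1^\alpha I_{r_2}}$ (respectively $e^{-\mu T r_2^\alpha I_{r_2}}$), and then apply $\mathbbm{E}_{I_{r_2}}$. Since $\nu$ depends only on the distances it factors out, and each random factor collapses to the Laplace transform of the common interference $I_{r_2}$ sampled at two different arguments, yielding $\mathcal{L}_{I_{r_2}}(\mu T r_1^\alpha)$ and $\mathcal{L}_{I_{r_2}}(\mu T r_2^\alpha)$; this reproduces the bracketed integrand of $\eqref{cov2}$.

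Finally I would reinsert this conditional probability into the spatial integral. The key observation is that $\mathbbm{E}_{I_{r_2}}$ already averages over the locations and fading of every DU beyond $r_2$, so the $r_3$-and-beyond integration in $\eqref{eq:pc_gamma}$ is absorbed into the Laplace transform and the joint density $b(\mathbf{r},r_3)$ reduces to $b(r_1,r_2)$, leaving the double integral over $(r_1,r_2)$ of $\eqref{cov2}$. The main obstacle is the hypoexponential tail and its coefficient identification; the subtlety worth flagging is that a single interference variable $I_{r_2}$ feeds both terms yet its Laplace transform is evaluated at the two distinct points $\mu T r_1^\alpha$ and $\mu T r_2^\alpha$, which is exactly what separates this two-DU CoMP result from the single-DU case of Proposition~\ref{prop1}.
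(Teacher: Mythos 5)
Your proposal is correct and follows essentially the same route as the paper: specialize $\eqref{eq:pc_gamma}$ to $c=2$, condition on $I_{r_2}$, evaluate the tail of $h_1r_1^{-\alpha}+h_2r_2^{-\alpha}$ (the paper does this by explicitly carrying out the nested integrals over $h_1'$ and $h_2'$, which is exactly your hypoexponential-tail computation with the same weights $1+\nu$ and $\nu$), and then convert $\mathbbm{E}_{I_{r_2}}$ of the two exponential factors into the Laplace transform evaluated at $\mu T r_1^\alpha$ and $\mu T r_2^\alpha$. Your coefficient identities and the reduction of the spatial integral to the double integral against $b(r_1,r_2)$ match the paper's derivation.
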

\begin{proof} 
For CoMP transmissions with $c=2$ DUs, the coverage probability can be written as
\begin{align}
	p_{cov}(T,2,\mathbf{r},\gamma_{\s})  
	&= \int_{r_1=0}^\infty \int_{r_2=r_1}^{B(\gamma_{\s})r_1} \pr\Big(\Gamma(2, \mathbf{r},\gamma_{\s}) \geq T\Big) b(r_1,r_2) dr_2dr_1 \notag\\
	&= \int_{r_1=0}^\infty \int_{r_2=r_1}^{B(\gamma_{\s})r_1} \pr\Big( h_1r_1^{-\alpha} + h_2r_2^{-\alpha} \geq T(I_{r_{2}}+\sigma^{2})\Big) b(r_1,r_2) dr_2dr_1.
\label{eq:pc_special2}
\end{align}
Based on Rayleigh fading channel, we can derive the followings of CoMP transmissions as
\begingroup
\allowdisplaybreaks
\begin{align} \label{eq:pc_special22}
	&\pr\Big( h_1r_1^{-\alpha} + h_2r_2^{-\alpha} \geq T(I_{r_{2}}+\sigma^{2})\Big)  \notag\\
	&\overset{\underset{\mathrm{(a)}}{}}{=} \mathbbm{E}_{I_{r_{2}}} \left[ \int_{T(I_{r_{2}}+\sigma^2)r_1^\alpha}^{\infty} \int_{\left(T(I_{r_{2}}+\sigma^2)-h_1r_1^{-\alpha}\right)r_2^{\alpha}}^{\infty}  \pr\left( h_{1} \!=\! h_{1}'\right) \pr\left( h_{2} \!=\! h_{2}'\right) dh_{1}'dh_{2}' \right] \notag\\
	&\overset{\underset{\mathrm{(b)}}{}}{=}  \mathbbm{E}_{I_{r_{2}}} \left[ \left(1+\nu\right) e^{-\mu T r_{1}^{\alpha}\left( I_{r_{2}} + \sigma^{2} \right)}  - \nu e^{-\mu T r_{2}^{\alpha}\left( I_{r_{2}} + \sigma^{2} \right)} \right] \notag \\
	&\overset{\underset{\mathrm{(c)}}{}}{=} \left(1+\nu\right)e^{-\mu T r_1^\alpha \sigma^2} \mathcal{L}_{I_{r_2}}(\mu T r_1^\alpha) -\nu e^{-\mu T r_2^\alpha \sigma^2} \mathcal{L}_{I_{r_2}}(\mu T r_2^\alpha),
\end{align}
\endgroup
where (a) follows the generalized form of $\eqref{eq:pc_condition}$ with $c=2$. The equality (b) is the result of double integrals with respect to $h_{1}'$ and $h_{2}'$ along with  notational definition of $\nu=\frac{1}{(\frac{r_1}{r_2})^{-\alpha}-1}$. The last equality (c) holds due to Laplace transform of $\mathcal{L}_{I_{r_2}}(x)=\mathbbm{E}_{I_{r_2}}\left[ e^{-xI_{r_2}}\right]$ of $I_{r_2}$ for the second nearest DU. We can obtain the coverage probability of CoMP transmission of $c=2$ in $\eqref{cov2}$ by substituting $\eqref{eq:pc_special22}$ into $\eqref{eq:pc_special2}$, which completes the proof.
\end{proof}
\noindent We know that $B(\gamma_{\s})$ is a monotonic increasing function with respect to $\gamma_{\s}$. Therefore, as can be observed from results of $\eqref{cov1}$ in Proposition \ref{prop1} and of $\eqref{cov2}$ in Proposition \ref{prop2}, we can infer that PLD threshold substantially affects the integrals in their lower and upper limits, respectively, which accordingly influences coverage probability. We have concluded this phenomenon in the following corollary.
\begin{corollary} \label{cc}
The coverage probability of DM in $\eqref{cov1}$ is monotonically decreasing with the increment of PLD threshold of $\gamma_{\s}$. While, the coverage probability of CM in $\eqref{cov2}$ is increasing with larger $\gamma_{\s}$, which strikes a compelling tradeoff between CM/DM adjustment.
\end{corollary}
\noindent Moreover, we can acquire a general expression of achievable ergodic rate considering arbitrary service $\s$ under both CM and DM for an FFS-enhanced CoMP transmission which is given by
\begingroup
\allowdisplaybreaks
\begin{align}
	R(\s) & = \Emean \left[ \sum_{c=1}^{C} \ln\Big(1+\SINR(c,\mathbf{r}, \gamma_{\s})\Big) \right]  \overset{(a)}{=}\int_0^\infty \sum_{c=1}^{C} \pr \Big[\ln\Big( 1+\SINR(c,\mathbf{r}, \gamma_{\s}) \Big) \geq \xi \Big] d\xi \notag\\
	& \overset{(b)}{=} \int_0^\infty \sum_{c=1}^C \pr\Big(\SINR(c,\mathbf{r}, \gamma_{\s}) \geq e^{\xi}-1 \Big) d\xi \overset{(c)}{=}\int_0^\infty \sum_{c=1}^C \frac{\pr\Big( \SINR(c,\mathbf{r}, \gamma_{\s}) \geq \theta \Big)}{\theta+1} d\theta, \label{eq:se}
\end{align}
\endgroup
where (a) holds due to the conversion from the expectation to the integral of CCDF of the given variables. In equality (b), we transform the ergodic rate into a form of CCDF of SINR by leaving $\Gamma (c,\mathbf{r}, \gamma_{\s}) $ at the left hand side of the inequality. Moreover, equality (c) is obtained by changing variable as $\theta = e^{\xi}-1$. As a result, we can obtain the general ergodic rate by substituting coverage probabilities of $\eqref{cov1}$ and $\eqref{cov2}$ via $\eqref{eq:pc_general}$ into $\eqref{eq:se}$ for arbitrary services.

\subsection{Delay Analysis}

	For FFS-based latency-aware applications, we provide the tractable analysis for the delay time which is defined from the time when a packet arrives in CU to that when the packet is correctly received at users. Therefore, for the considered FFS network, a data packet will experience the sojourn time both in CU and DUs, propagation delay over fronthaul and access links, and overhead of control signalling of FFS functions. Note that the sojourn time is relevant to the data traffic status in the server, i.e., the system is potentially congested due to explosive number of data packets which provokes long sojourn time. Given $Q=q$ packets existed in the server, the sojourn time at server $\X \in \{\C,\D\}$ for service $\s\in \{\e,\U\}$ in mode $\M\in \{\CM, \DM\}$ can be obtained as \cite{queueing}
\begin{equation}
	\tau_{\X}(\M,\s| Q=q)=\frac{\beta_{\M,\X} l_{\s}(q+1)}{S_\X},
\end{equation}
where $\beta_{\M,\X}$ is the portion of processing load defined in $\eqref{proload}$, $l_{\s}$ indicates the required processing cycles for a packet of service ${\s}$, and $S_{\X}$ denotes the processing cycle rate of central processing unit (CPU) at server $\X\in \{\C, \D\}$. Note that $l_{\s}$ reveals that a packet experiences FFS network layers before transmission at access links in terms of the processing cycles required both in CU and DUs. The probability that general sojourn time $\tau_{\X}(\M,\s)$ can be approximated as
\begin{align}
	\pr\Big( \tau_{\X}(\M,\s)=\tau_{\X}(\M,\s|Q=q)\Big) &= \pr\left( Q= q\right)=(\rho_{\X})^{q}\cdot(1- \rho_{\X}), \label{probtau}
\end{align}
which implies that $q$ arrived packets are being processed while the arriving one is awaiting. Note that the equivalence is derived due to the asymptotic meaning of time duration and total number of packets. In $\eqref{probtau}$, notation $\rho_{\X}$ denotes the work load of server $\X$, which is given by
\begin{equation} \label{rho}
\rho_{\X}=\left\{  
\begin{aligned}  
&\frac{\sum_{\s} \sum_{\M} K_{\s}\lambda_{\s}p_{\M,\s}\beta_{\M,\C} l_{\s} }{S_{\C}},& & \text{if } \X = \C ,\\  
&\frac{\sum_{\s} \sum_{\M} \kappa_{\M} K_{\s}\lambda_{\s}p_{\M,\s}\beta_{\M,\D} l_{\s}  }{S_{\D}\cdot N},& & \text{if } \X = \D, \\ 
\end{aligned}  
\right.
\end{equation}
where $K_{\s}$ is the total users of service $\s$, $\lambda_{\s}$ indicates the packet arrival rates for a certain service and $\kappa_{\M}$ is defined as the average serving users per DU considering total $N$ DUs in the system. In $\eqref{rho}$, we consider FFS MAR $p_{\M,\s}$ for service $\s$ under mode $\M$, which potentially assigns different processing loads for CU $\beta_{\M,\C}$ and DUs $\beta_{\M,\D}$ in either CM or DM. For example, higher $p_{\CM, \s}$ in CM means that more processing work loads of CoMP-enabled functions is performed for service $\s$. Under conventional FCFS scheduling, each data packet will acquire full processing resources with the serving time denoted as $\tau_{\X}^{ser}(\M,\s)=\frac{\beta_{\M,\X}l_{\s}}{S_{\X}}$. Therefore, for the proposed PS scheduling mechanism, the average sojourn time at server $\X$ for service $\s$ in mode $\M$ can be expressed by
\begin{equation} \label{Eman}
	t_{\X}(\M,\s)=\Emean_q \left[\tau_{\X}(\M,\s | Q=q)\right]=\frac{\tau_{\X}^{ser}(\M,\s)}{1-\rho_{\X}},
\end{equation}
which implies that the shared service rate of PS scheduling is reduced to $S_{\X}(1-\rho_{\X})$. In the FFS network, we consider that fronthaul is established via wired optical fiber links where delay can be neglected. Accordingly, based on $\eqref{Eman}$, the average one-way delay of service $\s$ in mode $\M$ is obtained as
\begin{equation}
	t_{\rm{ow}}(\M,\s)=t_{\C}(\M,\s)+t_{\D}(\M,\s)+t_{\tti}(\s)+ t_{\co}(\M), \label{delay}
\end{equation}
where $t_{\C}(\M,\s)$ and $t_{\D}(\M,\s)$ are derived from $\eqref{Eman}$ containing processing cycles and loads. Notation of $t_{\tti}(\s)$ is the transmission time interval of service $\s$ over access links from DUs to users, which is composed of consecutive transmission symbols in time domain in a specific transmit direction as specified in \cite{GPP2}. The control overhead in mode $\M$ is denoted as $t_{\co}(\M)$, which intuitively results in $t_{\co}(\CM)\geq t_{\co}(\DM)$ since CoMP operation in CM requires higher signalling overhead than non-CoMP transmission of DM. Without loss of generality, we consider identical transmission time interval (TTI) for all supported services. To characterize the system delay performance, we consider E2E delay defined as the interval when a packet arrives in CU to that when the user successfully receives and decodes it. In a practical system, retransmission mechanism potentially takes place due to either unsuccessful transmission or reception. Therefore, the E2E delay of mode $\M$ and service $\s$ after $N_{f}$ failure transmissions is defined as
\begin{equation} \label{EEDelay}
	t_{\ee}(\M,\s)=\left(N_{f}+1\right) \cdot t_{\rm{ow}}(\M,\s)+N_{f} \cdot t_{\re},
\end{equation}
where $t_{\re}$ is the duration of retransmission operation. Note that the identical time of $t_{\re}$ is considered for both CM and DM modes due to sufficient wired fronthaul bandwidth. We denote $t_{\ee}(\s)$ as the average system E2E delay of service $\s$, and therefore the reliability of service $\s$ can be expressed as \cite{multiservice}
\begin{align} \label{eq:reliability}
\pr\Big( t_{\ee}(\s)\leq t_{\s}^{(max)} \Big)
&= \sum_{\M} \pr\Big( t_{\ee}(\M,\s)\leq t_{\s}^{(max)} \Big) p_{\M,\s} \overset{(a)}{=} \sum_{\M} \pr\Big( N_{f}\leq N_{f,\M,\s}^{(max)} \Big) p_{\M,\s},
\end{align}
where $t_{\s}^{(max)}$ is the maximum allowable delay of service $\s$.
The equality of (a) holds by substituting $\eqref{EEDelay}$, where $N_{f,\M,\s}^{(max)}$ indicates the maximum allowable attempts of retransmissions for service $\s$ in mode $\M$ which is derived as
\begin{equation}
	N_{f,\M,\s}^{(max)}=\Big\lfloor \frac{( t_{\s}^{(max)} - t_{\rm{ow}}(\M,\s) )^+}{t_{\rm{ow}}(\M,\s)+t_{\re}}\Big\rfloor,
\label{eq:max_f}
\end{equation}
where $\lfloor \cdot \rfloor$ is the ceiling operation and $\left( x \right)^+ = \max \left(0,x\right)$. Moreover, the CCDF of retransmission attempts under maximum allowable $N_{f,\M,\s}^{(max)}$ trials is given by
\begin{align}
	\pr \Big( N_{f} \leq N_{f,\M,\s}^{(max)} \Big) &= \sum_{i=0}^{N_{f,\M,\s}^{(max)}} p_{succ}(\M,\s) \cdot \Big(1-p_{succ}(\M,\s) \Big)^i = 1-\Big( 1 - p_{succ}(\M,\s) \Big)^{N_{f,\M,\s}^{(max)}+1},
\label{eq:reliability_m}
\end{align}
where $p_{succ}(\M,\s) = p_{cov}(T_{\s},c,\mathbf{r},\gamma_{\s})$ indicates the successful transmission probability which is equivalent to the coverage probability in $\eqref{eq:pc_general}$ under given SINR threshold $T_{\s}$ for service $\s$, and $c=\{1,2\}$ for DM and CM, respectively. We can infer from $\eqref{eq:reliability_m}$ that the retransmission will be initiated if the required signal quality is lower than the given threshold. However, more transmission retrials induce higher E2E delay which potentially deteriorates the reliability performance. Therefore, it becomes compellingly essential to allocate appropriate work loads $\rho_{\X}$ and FFS mode $p_{\M,\s}$ to satisfy multi-service requirements. Moreover, allocating FFS mode $p_{\M,\s}$ is equivalent to determine the PLD threshold of $\boldsymbol\gamma=\{\gamma_{\e},\gamma_{\U}\}$, i.e., CoMP transmission with higher SINR will be performed in most DUs if $p_{\CM,\s} \geq p_{\DM,\s}$ meaning that more CoMP-based work loads are assigned, which strikes a potential balance between ergodic rate and reliability performances. To elaborate a little further, inspired by \cite{i1,i2,i3}, we consider an average manner in our tractable analysis of hybrid services due to difficulty to acquire the exact number of queuing packets from a large-scale network. Moreover, the successful transmission probability of latency-aware service is partially affected by the coverage probability coming from SINR in CM/DM mechanisms. Therefore, with serving users uniformly randomly distributed, it requires probabilistic analysis to observe how this novel system works. As an potential extension of this work, instantaneous resource allocation and scheduling is treated as a considerably complex mechanism which can be designed with conceptual priority-based processor sharing and dynamic adjustment of CM/DM modes.

\subsection{Problem Formulation for Mixed Services} \label{PromFor}
	In the FFS-enabled network, we consider mixed services possessing the tradeoff among system ergodic rate, delay and reliability. We can observe from $\eqref{eq:du_assignment}$ that larger PLD threshold will cluster more candidate CoMP DUs to serve eMBB users, which implies that a large portion of users is served under CM. Accordingly, improving the ergodic rate and probability of successful transmission become critical for eMBB services. However, if the system serves more eMBB users in CM mode using CoMP, the CU will conduct higher work loads which potentially induces longer queuing time resulting in lower reliability for latency-aware users. If the latency-aware users are considered as higher priority, some throughput-oriented users will probably be served by a single DU, i.e., CoMP is not activated in DM which deteriorates throughput performance. Therefore, it becomes compellingly imperative to design the optimal FFS mechanism to strike the balance for mixed services. The proposed problem formulation is represented by
\begin{subequations} \label{ProbFor}
	\begin{align}
	& \max_{\boldsymbol\gamma=\{\gamma_{\e},\gamma_{\U}\}} \quad R(\e) \label{eq:obj}\\ 
	& \text{s.t.} \  \pr\Big( t_{\ee}(\U) \leq t_{\U}^{(max)}\Big) \geq \delta_{\U} \label{eq:st1}\\
	&\qquad  0 < \rho_{\X}< 1, \quad\ \forall \X \in \{\C,\D\}, \label{eq:st2}\\
	&\qquad \sum_{\M} p_{\M,\s}=1, \quad \forall \s \in \{\e, \U \}, \label{eq:st3}\\
	& \qquad 0\leq p_{\M,\s}\leq 1, \quad \forall \M\in\{\CM,\DM\},\s\in\{\e,\U\}. \label{eq:st4}
	\end{align}
\end{subequations}
In problem $\eqref{ProbFor}$, we aim at maximizing the ergodic rate $R(\e)$ obtained from $\eqref{eq:se}$ of throughput-oriented services while guaranteeing the reliability of latency-aware service acquired from $\eqref{eq:reliability}$ to be larger than the required threshold of latency-aware users $\delta_{\U}$ in $\eqref{eq:st1}$. Note that adjusting PLD $\boldsymbol\gamma = \{\gamma_{\e},\gamma_{\U}\}$ is equivalent to determining the FFS modes which potentially influence MAR $p_{\M,\s}$, ergodic rate and reliability performances. The constraint of $\eqref{eq:st2}$ stands for the operability of FFS system in CU and DUs meaning that total work loads are operable within the server capability. Moreover, constraints of $\eqref{eq:st3}$ and $\eqref{eq:st4}$ guarantee the feasible MAR values for allocating portion of total work loads to different modes of CM and DM. However, it is non-intuitive to acquire the optimal solution of $\boldsymbol \gamma$ in problem $\eqref{ProbFor}$ due to the existence of mixed services in terms of ergodic rate and reliability constraint. In the following, we propose a CFSMA scheme to resolve the optimization problem in $\eqref{ProbFor}$.

\section{Proposed CoMP-Enhanced Functional Split Mode Allocation (CFSMA) Scheme} \label{CHAPTER_ALGORITHM}

In our proposed CFSMA scheme, we first search for the feasible set of $\boldsymbol \gamma =\{\gamma_{\e}, \gamma_{\U}\}$ to fulfill reliability constraints of $\eqref{eq:st1}$ and $\eqref{eq:st2}$. Based on the feasible set, we can then obtain the optimum solution of $\{\gamma_{\e},\gamma_{\U}\}$ in order to maximize ergodic rate of eMBB service. As observed from Fig. \ref{fig:gamma2FSM}, deriving the optimum is equivalent to determining the functional split mode $p_{\M,\s}$ for each user connected to DUs according to PLD threshold $\boldsymbol \gamma$, i.e., $p_{\M,\s}$ is a function of $\boldsymbol\gamma$ derived from $\eqref{eq:c_du}$ and $\eqref{eq:C is 2}$. We denote $\boldsymbol{ \gamma}^{\mathcal{R}}$ and $\boldsymbol{ \gamma}^{\mathcal{O}}$ as reliable and operable sets of $\boldsymbol \gamma$ which imply satisfaction of constraints $\eqref{eq:st1}$ and $\eqref{eq:st2}$, respectively.

\subsection{Candidate Set Selection for Reliability and Operability}
	Based on $\eqref{rho}$, we can infer that $\rho_{\X}$ is monotonic with respect to $p_{\CM,\s}$ given packet arrival rate and network density. Note that we only consider MAR of CM mode $p_{\CM,\s}$ since the summation of both CM and DM modes is one, i.e., $p_{\CM,\s}+p_{\DM,\s}=1$. By jointly solving the following arithmetic problem 
\begin{equation}
\left\{
\begin{aligned}
&0<\rho_{\X}< 1, \quad\forall\X,\\
&\sum_{\M} p_{\M,\s}=1, \quad\forall \s,\\
&p_{\M,\s}\geq 0, \quad\forall \M,\s,\\
\end{aligned}
\right.
\label{operable}
\end{equation}
we can obtain the corresponding upper and lower bounds of candidate solution set of operability for $p_{\CM,\s}$, which are denoted as $p_{\CM,\s}^{\mathcal{O},ub}$ and $p_{\CM,\s}^{\mathcal{O},lb}$, respectively. Therefore, the candidate set of operability $\boldsymbol{ \gamma}^{\mathcal{O}}$ is acquired based on upper and lower bounds of $p_{\CM,\s}^{\mathcal{O},ub}$ and $p_{\CM,\s}^{\mathcal{O},lb}$. Moreover, we can obtain the candidate solution set $\boldsymbol{\gamma}^{\mathcal{R}}$ with reliability constraint $\eqref{eq:st1}$ held based on Theorem \ref{theorem:1}.
\begin{theorem} \label{theorem:1}
	For reliability constraint $\eqref{eq:st1}$ with any service $\s$, the candidate set of reliability $\boldsymbol{\gamma}^{\mathcal{R}}$ is monotonic and continuous.
\end{theorem}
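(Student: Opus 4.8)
The plan is to establish the theorem by analyzing how the reliability expression depends on the PLD threshold $\boldsymbol{\gamma}$. The statement claims that the candidate set $\boldsymbol{\gamma}^{\mathcal{R}}$—the set of PLD thresholds satisfying the reliability constraint $\eqref{eq:st1}$—is both monotonic and continuous. I would first trace the chain of dependence from $\gamma_{\s}$ down to the reliability $\pr(t_{\ee}(\s)\leq t_{\s}^{(max)})$, and then argue that this composite map has the right monotonicity and continuity, so that the sublevel/superlevel set defined by $\geq R_{\U}$ is an interval (hence "monotonic and continuous" as a candidate set).

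First I would make the dependence explicit. From $\eqref{eq:c_du}$ and $\eqref{eq:C is 2}$, the MAR $p_{\CM,\s}$ is an integral whose limits depend on $B(\gamma_{\s})=\gamma_{\s}^{1/\alpha}$; increasing $\gamma_{\s}$ widens the CoMP integration range, so $p_{\CM,\s}$ is monotonically increasing (and continuous, being an integral with continuously varying limits) in $\gamma_{\s}$. Next, by $\eqref{rho}$ the work load $\rho_{\X}$ is affine in $p_{\CM,\s}$ (using $p_{\DM,\s}=1-p_{\CM,\s}$), hence monotone and continuous in $\gamma_{\s}$; this feeds into the one-way delay $t_{\ow}(\M,\s)$ via $\eqref{Eman}$–$\eqref{delay}$, which is monotone in $\rho_{\X}$ on $(0,1)$. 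I would then follow the effect through $\eqref{eq:max_f}$ — note $N_{f,\M,\s}^{(max)}$ is a floor, so it is a monotone but piecewise-constant (stepwise) function of the delay — and finally through the reliability formula $\eqref{eq:reliability}$ combined with $\eqref{eq:reliability_m}$. The reliability is a $p_{\M,\s}$-weighted sum over modes of $1-(1-p_{succ}(\M,\s))^{N_{f,\M,\s}^{(max)}+1}$, where the weights $p_{\M,\s}$ themselves vary monotonically with $\gamma_{\s}$.

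The key step is then to verify that this whole composition is monotone and continuous in each $\gamma_{\s}$, so that the reliability-feasible set $\{\boldsymbol{\gamma}:\pr(t_{\ee}(\s)\le t_{\s}^{(max)})\ge R_{\U}\}$ is a connected interval (a "monotonic and continuous" candidate region) bounded by a single threshold value $\gamma_{\s}^{\mathcal{R}}$. The argument I would give is that a composition of monotone functions is monotone, and that continuity is preserved under integration and the algebraic operations appearing above; then monotonicity of the reliability in $\gamma_{\s}$ guarantees that the superlevel set defined by $\eqref{eq:st1}$ is an interval, and continuity guarantees its boundary is attained, pinning down $\boldsymbol{\gamma}^{\mathcal{R}}$ as a well-defined contiguous set rather than a union of disjoint pieces.

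**The main obstacle** will be the floor operation in $\eqref{eq:max_f}$: because $N_{f,\M,\s}^{(max)}$ is integer-valued, the reliability $\eqref{eq:reliability_m}$ jumps discretely as $\gamma_{\s}$ crosses values where $N_{f,\M,\s}^{(max)}$ steps down, so strict continuity fails at isolated points. I would handle this by arguing that the intended notion of "continuous" here is that the feasible region is a contiguous interval (no gaps), which survives the stepwise behavior because the underlying delay — and hence the reliability as a whole — is still monotone: each downward step in $N_{f,\M,\s}^{(max)}$ only decreases reliability, preserving the interval structure of the superlevel set. Establishing that monotonicity is uniform across both modes simultaneously (so the mode-weighted sum does not reverse direction) is the delicate point, and I would resolve it by checking the sign of the derivative of the $\gamma_{\s}$-dependence through the full composition, relying on $t_{\co}(\CM)\ge t_{\co}(\DM)$ and $\beta_{\CM,\C}>\beta_{\DM,\C}$ to fix the direction of the trade-off.
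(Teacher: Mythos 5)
Your proposal correctly traces the dependency chain $\gamma_{\s}\to p_{\CM,\s}\to\rho_{\X}\to t_{\ow}\to N_{f,\M,\s}^{(max)}\to$ reliability, and your final step --- that $p_{\CM,\s}$ is continuous and monotonically increasing in $\gamma_{\s}$ because $B(\gamma_{\s})$ only enters as an integration limit in $\eqref{eq:c_du}$ --- is exactly the closing argument of the paper's proof (via $\eqref{eq:coverage}$). However, there is a genuine gap at what you yourself flag as ``the delicate point'': your entire argument rests on the reliability $\pr(t_{\ee}(\s)\le t_{\s}^{(max)})$ being \emph{monotone} in $\gamma_{\s}$, and you defer establishing this to ``checking the sign of the derivative.'' That sign is not fixed. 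Increasing $\gamma_{\s}$ shifts mass from DM to CM, which simultaneously (i) raises the CU work load (since $\beta_{\CM,\C}>\beta_{\DM,\C}$) and adds overhead $t_{\co}(\CM)\ge t_{\co}(\DM)$, reducing $N_{f,\M,\s}^{(max)}$ and hurting reliability, while (ii) raising $p_{succ}$ through CoMP gain and reweighting the mode-sum in $\eqref{eq:reliability}$ toward the higher-$p_{succ}$ mode, helping reliability. The DU work load direction is itself ambiguous because $\beta_{\CM,\D}\kappa_{\CM}$ versus $\beta_{\DM,\D}\kappa_{\DM}$ trade off in opposite directions. So the reliability can genuinely rise and then fall (or vice versa), and a monotonicity-based interval argument collapses.

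The paper avoids this by proving something weaker than monotonicity but sufficient: it computes $\frac{d^{2}t_{\X}(\M,\s)}{dp_{\CM,\s}^{2}}>0$ in $\eqref{dtp}$ (a squared factor over $(1-\rho_{\X})^{3}$, positive irrespective of the \emph{sign} of $d\rho_{\X}/dp_{\CM,\s}$), concludes $t_{\ee}$ is convex and hence the reliability is concave in $p_{\CM,\s}$, and then uses the fact that the superlevel set $\{\,p_{\CM,\s}:\text{reliability}\ge R_{\U}\,\}$ of a concave function is an interval $[p_{\CM,\s}^{\mathcal{R},lb},p_{\CM,\s}^{\mathcal{R},ub}]$ --- no claim about the direction of variation is needed. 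Pulling that interval back through the monotone continuous map $\gamma_{\s}\mapsto p_{\CM,\s}$ gives the contiguous set $\boldsymbol{\gamma}^{\mathcal{R}}$. To repair your proof you would need to replace the monotonicity claim with this second-derivative/concavity argument (or otherwise prove connectedness of the superlevel set without assuming a single crossing). One point in your favor: you explicitly confront the floor operation in $\eqref{eq:max_f}$, which makes the reliability piecewise constant with jumps; the paper silently treats $N_{f,\M,\s}^{(max)}$ as a concave function, so your discussion of why the interval structure survives the discretization is a detail the paper's own proof omits.
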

\begin{proof}
	As observed from $t_{\ee}(\M,\s)$ in $\eqref{EEDelay}$, the average sojourn time $t_{\X}(\M,\s)$ in $\eqref{Eman}$ is related to $p_{\CM,\s}$ in $\eqref{rho}$. By taking the second-order derivative of $t_{\X}(\M,\s)$ with respect to $p_{\CM,\s}$, we have $\eqref{dtp}$ as shown at top of this page.
\begin{figure*}[!t]
\begin{align} \label{dtp}
\frac{{d}^2 t_{\X}(\M,\s)}{{ {d} p_{\CM,\s}^2}}=
\left\{\begin{array}{ll}
 	\frac{2 \tau_{\X}^{ser}(\M,\s)}{\left(1-\rho_{\X}\right)^{3}} \left[ \sum_{\s} K_{\s}\lambda_{\s}l_{\s} \left(\beta_{\CM,\X}-\beta_{\DM,\X}\right)\right]^2, &\text{if} \X=\C, \\ \\
 	\frac{2 \tau_{\X}^{ser}(\M,\s)}{\left(1-\rho_{\X}\right)^{3}} \left[ \frac{1}{N}\sum_{\s} K_{\s}\lambda_{\s} l_{\s} \left(\beta_{\CM,\X}\kappa_{\CM}-\beta_{\DM,\X}\kappa_{\DM}\right)\right]^2, &\text{if} \X=\D.
\end{array} \right.
\end{align}
\hrulefill
\end{figure*}
We can observe from $\eqref{dtp}$ that $\frac{{d}^2 t_{\X}(\M,\s)}{{ {d} p_{\CM,\s}^2}}$ is positive since $\tau_{\X}^{ser}(\M,\s)\geq 0$ and $0<\rho_{\X} <1$, which implies that $t_{\X}(\M,\s)$ is a convex function with respect to $p_{\CM,\s}$. Since other parameters in one-way delay $t_{\ow}(\M,\s)$ are constant, both $t_{\C}(\M,\s)$ and $t_{\D}(\M,\s)$ dominate $t_{\ee}(\U)$ leading to a convex function of $t_{\ee}(\U)$ with respect to $p_{\CM,\s}$ \cite{convex}. Based on $\eqref{eq:max_f}$, $N_{f,\M,\s}^{(max)}$ is derived to be a monotonically decreasing function of $t_{\ow}(\M,\s)$. Consequently, $N_{f,\M,\s}^{(max)}$ is a concave function with respect to $p_{\CM,\s}$, and so is probability of retransmission attempts in $\eqref{eq:reliability_m}$. The reliability term of $\eqref{eq:reliability}$ is also a concave function according to the affine function of $p_{\CM,\s}+p_{\DM,\s}=1$. Accordingly, for arbitrary concave function, the solution set enabling $\eqref{eq:st1}$ is almost surely continuous \cite{convex}. Therefore, we can obtain $p_{\CM,\s}^{\mathcal{R},ub}$ and $p_{\CM,\s}^{\mathcal{R},lb}$ as the upper and lower bounds of reliability set, respectively. Based on $\eqref{eq:c_du}$ and $\eqref{eq:C is 2}$, the functional split MAR $p_{\M,\s}$ for considered 2 CoMP DUs becomes
\begin{equation}
	\left\{
	\begin{aligned}
	& p_{\CM,\s}=\int_{r_1=0}^\infty \int_{r_2=r_1}^{B(\gamma_{\s})r_1} b(r_1,r_2) dr_2dr_1,\\
	& p_{\DM,\s}=\int_{r_1=0}^\infty \int_{r_2=B(\gamma_{\s})r_1}^{\infty} b(r_1,r_2) dr_2dr_1.
	\end{aligned}
	\right.
	\label{eq:coverage}
\end{equation}
It can be seen that $p_{\CM,\s}$ monotonically increases with both $B(\gamma_{\s})$ and $\gamma_{\s}$ as the upper limit of integral. As a result, the reliability constraint $\eqref{eq:st1}$ possesses monotonic and continuous candidate set of $\boldsymbol{\gamma}^{\mathcal{R}}$ based on the upper and lower bounds of $p_{\CM,\s}^{\mathcal{R},ub}$ and $p_{\CM,\s}^{\mathcal{R},lb}$ for any service $\s$. This completes the proof.
\end{proof}

According to the sets of $\boldsymbol{\gamma}^{\mathcal{R}}$ and $\boldsymbol{\gamma}^{\mathcal{O}}$ bounded by their upper and lower bounds, the feasible set can be obtained as $\boldsymbol \gamma^{\mathcal{F}}=\boldsymbol{\gamma}^{\mathcal{R}} \cap\boldsymbol{\gamma}^{\mathcal{O}}$. Within the feasible set, any solution $\boldsymbol \gamma \in \boldsymbol{\gamma}^{\mathcal{F}}$ satisfies the constraints in problem $\eqref{ProbFor}$.

\subsection{Ergodic Rate Maximization}
	After deriving the feasible set satisfying both reliability and operability constraints in problem $\eqref{ProbFor}$, we then find the optimal solution $\boldsymbol{\gamma}^{*} \in \boldsymbol{\gamma}^{\mathcal{F}}$ for maximizing the ergodic rate of eMBB services, which is stated in Theorem \ref{theorem:2}.
\begin{theorem}
	The optimal solution of $\boldsymbol{\gamma}^{*} \in \boldsymbol{\gamma}^{\mathcal{F}}$ in problem $\eqref{ProbFor}$ is obtained by selecting $\boldsymbol{\gamma}$ with the largest $\gamma_{\s}$ while satisfying constraints of reliability $\eqref{eq:st1}$ and operability $\eqref{eq:st2}$.
	\label{theorem:2}
\end{theorem}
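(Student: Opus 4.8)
The plan is to show that the objective $R(\e)$ is a monotonically non-decreasing function of the eMBB PLD threshold $\gamma_{\e}$, so that maximizing it over the feasible set $\boldsymbol{\gamma}^{\mathcal{F}}$ reduces to pushing $\gamma_{\e}$ to the largest value that the reliability and operability constraints permit. First I would observe that for any fixed fading realization and distances $\mathbf{r}$, the CoMP SINR dominates the non-CoMP SINR, i.e. $\SINR(2,\mathbf{r},\gamma_{\s}) \geq \SINR(1,\mathbf{r},\gamma_{\s})$. This follows directly from \eqref{eq:sinr}: promoting the second nearest DU from an interferer to a cooperative server simultaneously adds the term $h_2 r_2^{-\alpha}$ to the signal numerator and removes it from the interference in the denominator. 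Consequently, the coverage probability, and hence the per-user rate contribution appearing in \eqref{eq:se}, is at least as large under CM as under DM.

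Next I would quantify how increasing $\gamma_{\e}$ reshapes the mode allocation. From \eqref{eq:coverage}, the upper integration limit $B(\gamma_{\e})r_1$ increases monotonically with $\gamma_{\e}$, so $p_{\CM,\e}$ grows while $p_{\DM,\e}$ decreases by the same amount; raising $\gamma_{\e}$ therefore transfers probability mass of users from the non-CoMP regime into the CoMP regime. Combined with the pointwise SINR domination, each transferred user's contribution to the ergodic rate can only increase. Formally, differentiating $R(\e)$ as expressed through \eqref{eq:se}, \eqref{cov1} and \eqref{cov2} with respect to $\gamma_{\e}$ via the Leibniz rule produces boundary terms at $r_2 = B(\gamma_{\e})r_1$ whose net sign is non-negative precisely because the CM integrand exceeds the DM integrand there, establishing $\frac{dR(\e)}{d\gamma_{\e}} \geq 0$.

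It remains to translate monotonicity into the stated selection rule. Since $R(\e)$ depends on $\boldsymbol{\gamma}$ only through $\gamma_{\e}$ and is non-decreasing in it, the maximizer lies on the upper boundary of the admissible range of $\gamma_{\e}$. By Theorem \ref{theorem:1} the reliability set $\boldsymbol{\gamma}^{\mathcal{R}}$ is a continuous, monotone interval bounded above by $p_{\CM,\s}^{\mathcal{R},ub}$, and the operability set $\boldsymbol{\gamma}^{\mathcal{O}}$ obtained from \eqref{operable} is likewise bounded above by $p_{\CM,\s}^{\mathcal{O},ub}$; their intersection $\boldsymbol{\gamma}^{\mathcal{F}}=\boldsymbol{\gamma}^{\mathcal{R}}\cap\boldsymbol{\gamma}^{\mathcal{O}}$ is therefore a bounded interval whose supremum corresponds, through the monotone map \eqref{eq:coverage}, to the largest feasible $\gamma_{\e}$. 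Selecting this largest $\gamma_{\s}$ within $\boldsymbol{\gamma}^{\mathcal{F}}$ yields $\boldsymbol{\gamma}^{*}$, completing the argument.

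The main obstacle I anticipate is the rigorous monotonicity step: because $\gamma_{\e}$ enters \eqref{eq:se} simultaneously through the integration limits and the distance density $b(\mathbf{r})$, a direct derivative is cumbersome. I would therefore favor the mass-transfer argument built on the pointwise domination $\SINR(2,\cdot)\geq\SINR(1,\cdot)$, which sidesteps the heavy computation and makes the non-negativity of the boundary terms transparent. A secondary subtlety is confirming that the coupling between $\gamma_{\e}$ and $\gamma_{\U}$ through the shared work load $\rho_{\X}$ in \eqref{rho} cannot make a smaller $\gamma_{\e}$ preferable; this is ruled out because $R(\e)$ is independent of $\gamma_{\U}$, so $\gamma_{\U}$ may be tuned solely to enlarge the admissible range of $\gamma_{\e}$.
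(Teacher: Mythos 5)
Your proposal is correct, but it reaches the key inequality by a genuinely different route than the paper. The paper rewrites the total coverage probability as a single outer integral whose inner integral splits into a DM part with integrand $\mathcal{A}_{\DM}=\mathcal{L}_{I_{r_1}}(\mu T r_1^\alpha)$ and a CM part with integrand $\mathcal{A}_{\CM}$, and then proves $\mathcal{A}_{\CM}\geq\mathcal{A}_{\DM}$ analytically: it expands the difference as $(1+\nu)\mathcal{S}_1+\nu\mathcal{S}_2$, plugs in the explicit PPP Laplace transform through the function $G(\cdot)$, and checks the sign of each exponent, which requires the side condition $G(1)\geq 1$ that the paper only justifies as holding ``under reasonable configurations.'' You instead couple the two modes on a common realization of the point process and the fading and observe that promoting the second-nearest DU from interferer to cooperator moves $h_2 r_2^{-\alpha}$ from the denominator of \eqref{eq:sinr} to the numerator, so $\SINR(2,\mathbf{r},\gamma_{\s})\geq\SINR(1,\mathbf{r},\gamma_{\s})$ pointwise; the Leibniz boundary terms at $r_2=B(\gamma_{\s})r_1$ are then non-negative and $R(\e)$ is non-decreasing in $\gamma_{\e}$. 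This is unconditional, avoids the Laplace-transform bookkeeping, and is arguably cleaner, at the cost of not producing the closed-form gap $(1+\nu)\mathcal{S}_1+\nu\mathcal{S}_2$ that the paper's computation yields. One caution: your first paragraph briefly suggests comparing the coverage probabilities of CM and DM users directly, which is not legitimate since the two populations condition on different geometric events ($r_2$ near $r_1$ versus $r_2$ beyond $B(\gamma_{\s})r_1$); only the mass-transfer version at the moving boundary, which you do supply in the second paragraph, is the valid form of the argument. Your closing remark on the $\gamma_{\e}$--$\gamma_{\U}$ coupling through $\rho_{\X}$ is a point the paper glosses over, and handling it by noting that $R(\e)$ does not depend on $\gamma_{\U}$ is the right resolution.
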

\begin{proof}
	Based on $\eqref{eq:se}$, ergodic rate can be obtained from coverage probability of non-CoMP and CoMP transmissions derived from Propositions \ref{prop1} and \ref{prop2}, respectively. Though it is difficult to perform convexity derivation from complex double integrals in $\eqref{eq:se}$ and coverage probability, solving the maximization of ergodic rate can be asymptotically equivalent to deriving the total coverage probability in the inner summation of integral of $\eqref{eq:se}$. According to $\eqref{cov1}$ and $\eqref{cov2}$, the total coverage probability is given by
\begin{align} \label{eq:rewrite_pc}
&		p_{cov}(T,1,\mathbf{r},\gamma_{\s}) +p_{cov}(T,2,\mathbf{r},\gamma_{\s}) \notag \\ &=\int_{r_1=0}^\infty e^{-\mu T r_1^\alpha \sigma^2}
		\left( \int_{r_2=B(\gamma_{\s})r_1}^\infty \mathcal{A}_{\DM} b(r_1,r_2) dr_2  +\int_{r_2=r_1}^{B(\gamma_{\s})r_1} \mathcal{A}_{\CM} b(r_1,r_2) dr_2 \right) dr_1,
\end{align}	
where $\mathcal{A}_{\DM}= \mathcal{L}_{I_{r_1}}(\mu T r_1^\alpha)$ and $\mathcal{A}_{\CM}=(1+\nu) \mathcal{L}_{I_{r_2}}(\mu T r_1^\alpha) -\nu e^{-\mu T (r_2^\alpha-r_1^\alpha) \sigma^2} \mathcal{L}_{I_{r_2}}(\mu T r_2^\alpha)$. We can observe from $\eqref{eq:rewrite_pc}$ that 
coverage probability regarding candidate MAR solution of DM is monotonically decreasing with increment of PLD $\gamma_{\s}$ in the lower bound of first integral. However, coverage probability of CM is increasing with larger $\gamma_{\s}$ in the upper limit of second integral. This strikes a potential tradeoff between CoMP and non-CoMP transmissions, which can also be inferred from Corollary \ref{cc}. Thus, we proceed to prove that either $\mathcal{A}_{\DM}$ or $\mathcal{A}_{\CM}$ is dominant, i.e., it is equivalent to proving that either larger or smaller value of $\gamma_{\s}$ can provide maximization of ergodic rate. Comparing two integrands related to DM and CM in $\eqref{eq:rewrite_pc}$, we have
	\begin{align} \label{eq:comparison}
		\mathcal{A}_{\CM}-\mathcal{A}_{\DM} 
		&=(1+\nu) \mathcal{L}_{I_{r_2}}(\mu T r_1^\alpha) -\nu e^{-\mu T (r_2^\alpha-r_1^\alpha) \sigma^2} \mathcal{L}_{I_{r_2}}(\mu T r_2^\alpha)-\mathcal{L}_{I_{r_1}}(\mu T r_1^\alpha)\notag\\
		&=(1+\nu) \left(\mathcal{L}_{I_{r_2}}(\mu T r_1^\alpha)-\mathcal{L}_{I_{r_1}}(\mu T r_1^\alpha)\right)  +\nu \left(\mathcal{L}_{I_{r_1}}(\mu T r_1^\alpha)-e^{-\mu T (r_2^\alpha-r_1^\alpha) \sigma^2} \mathcal{L}_{I_{r_2}}(\mu T r_2^\alpha)\right)\notag\\
		&=(1+\nu) \mathcal{S}_1 +\nu \mathcal{S}_2,
	\end{align}
where 
\begin{align}
	&\mathcal{S}_1\!=\!e^{\Lambda_{\D} \pi \left[ r_2^2 - r_1^2G\left(\left(\frac{r_1}{r_2}\right)^\alpha \right) \right]} \!-\! e^{\Lambda_{\D} \pi r_{1}^2\left(1-G(1)\right)},\\
	& \mathcal{S}_2\!=\!e^{\Lambda_{\D} \pi r_{1}^2\left(1-G(1)\right)} \!-\! e^{\Lambda_{\D} \pi r_{2}^2\left(1-G(1)\right)-\mu T \left(r_2^\alpha-r_1^\alpha\right) \sigma^2}.
\end{align}
Note that we define an auxiliary parameter $G(x)\!=\!\frac{2 \left(\mu T \right)^{\frac{2}{\alpha}}}{\alpha} \int_{0}^{\infty} h^{\frac{2}{\alpha}} \left( \Gamma(-\frac{2}{\alpha},\mu T h x) - \Gamma(-\frac{2}{\alpha}) \right) \mu e^{-\mu h}dh$, where $\Gamma(x)$ is gamma function and $\Gamma(s,x)=\int_{x}^{\infty} t^{s-1}e^{-t} dt$ is upper imcomplete gamma function \cite{stochastic}. We can infer that $G(x)\geq 0$ is monotonically increasing with $x$ and will be almost surely $G(1)\geq 1$ under the reasonable configuration of averaged channel gain, number of deployed DUs and sufficient large threshold. Due to monotonicity property of the exponential function, we know that comparison between parameters of $e^{x_1}$ and $e^{x_2}$ is equal to comparing their power terms of $x_{1}$ and $x_{2}$. Accordingly, we can derive that $\mathcal{S}_1\geq 0$ according to
\begingroup
\allowdisplaybreaks
\begin{align} \label{S1}
		&r_2^2-r_1^2G\left( \left( \frac{r_1}{r_2} \right)^{\alpha} \right)- r_{1}^2 \left(1-G(1) \right) = r_2^2-r_1^2 \left[ 1-G(1)+G \left( \left(\frac{r_1}{r_2}\right)^\alpha \right) \right] \geq 0.
\end{align}
\endgroup
Furthermore, we have $\mathcal{S}_2\geq 0$ since
\begin{equation} \label{S2}
		\Lambda_{\D} \pi\left( 1-G(1) \right) \left( r_1^2-r_2^2 \right)+\mu T \sigma^2 \left(r_2^\alpha-r_1^\alpha\right)\overset{(a)}{\geq}0
\end{equation}
where (a) holds since $r_2\geq r_1$. From $\eqref{S1}$, $\eqref{S2}$ and $\nu \geq 0$, it can be inferred that $\eqref{eq:comparison}$ is positive, i.e., $\mathcal{A}_{\CM} \geq \mathcal{A}_{\DM}$ is proved. Therefore, the second integral in the bracket of $\eqref{eq:rewrite_pc}$ regarding CM-based transmission will dominate the system performance. By selecting larger $\gamma_{\s}$ constrained by reliability $\eqref{eq:st1}$ and operability $\eqref{eq:st2}$, the optimal ergodic rate in problem $\eqref{ProbFor}$ can be achieved. This completes the proof.
	\end{proof}

The concrete steps of proposed CFSMA scheme is demonstrated in Algorithm \ref{alg:FSMA}. We first numerically acquire candidate sets of operability $\boldsymbol{\gamma}^{\mathcal{O}}$ and of reliability $\boldsymbol{\gamma}^{\mathcal{R}}$ based on $\eqref{operable}$ and Theorem \ref{theorem:1}, respectively. By considering joint candidate results, we can obtain the total feasible set as $\boldsymbol{\gamma}^{\mathcal{F}}$. According to Theorem \ref{theorem:2}, we then numerically search the optimal solution $\boldsymbol{\gamma}^{*}$ with the largest $\gamma_{\s} \in \boldsymbol{\gamma}^{\mathcal{F}}$. 
Based on the given $\boldsymbol{\gamma}^{*}$ and $\eqref{eq:du_assignment}$, we can classify users into either CM-based CoMP or DM-based non-CoMP transmissions. The user will be served under CoMP if at least two DUs are deployed within its service range satisfying $\eqref{eq:du_assignment}$, i.e, the FFS mode can be assigned for all users with mixed services. The generic problem possesses a complexity order of $\mathcal{O}\left( |\gamma_{\e}|\cdot |\gamma_{\U}| \cdot |\mathcal{D}_{\CM, \e}| \cdot |\mathcal{D}_{\CM, \U}| \right)$, where $\mathcal{D}_{\CM, \e}$ and $\mathcal{D}_{\CM, \U}$ are defined as the solution sets of $p_{\CM, \e}$ and $p_{\CM, \U}$, respectively. The operation of $|\mathcal{X}|$ is the cardinality of a set $\mathcal{X}$. Note that all parameters are continuous, which require the quantization method for exhaustive search. Based on constraints of feasible MAR values in $\eqref{eq:st3}$ and $\eqref{eq:st4}$, the complexity of problem $\eqref{ProbFor}$ reduces to $\mathcal{O}\left( |\gamma_{\e}|\cdot |\gamma_{\U}| \right)$. Satisfying $\eqref{eq:st1}$ and $\eqref{eq:st2}$ based on solution of $\eqref{operable}$ and Theorem \ref{theorem:1} yields a computational complexity order of $\mathcal{O}\left( |\gamma_{\e}^{\mathcal{F}}| \cdot |\gamma_{\U}^{\mathcal{F}}| \right)$, where $|\gamma_{\e}^{\mathcal{F}}|\leq |\gamma_{\e}|$ and $|\gamma_{\U}^{\mathcal{F}}|\leq |\gamma_{\U}|$. According to derivation of Theorem \ref{theorem:2}, the proposed algorithm CFSMA has a comparably lower complexity of $\mathcal{O}\left( \max\left( |\gamma_{\e}^{\mathcal{F}}|, |\gamma_{\U}^{\mathcal{F}}| \right) \right)$.

\begin{algorithm}[t]
\small
	\caption{Proposed CFSMA Scheme}
	\SetAlgoLined
	\DontPrintSemicolon
	\label{alg:FSMA}
	\begin{algorithmic}[1]
		\STATE \textbf{Initialization:}\\
	
		{ - Reliability requirement $\delta_{\U}$}\\
		{ - E2E delay requirement $t_{\U}^{(max)}$ for URLLC users}\\
		{ - DU density $\Lambda_{\D}$}\\
		{ - Throughput-/Reliability-oriented user density $\Lambda_{\s}$, $\forall \s\in \{\e,\U\}$}\\
		\STATE Obtain the candidate set of operability $\boldsymbol{\gamma}^{\mathcal{O}}$ by solving $\eqref{operable} $\\
		\STATE Acquire the candidate set of reliability $\boldsymbol{\gamma}^{\mathcal{R}}$ based on Theorem \ref{theorem:1}\\
		\STATE Derive feasible set $\boldsymbol{\gamma}^{\mathcal{F}}=\boldsymbol \gamma^{\mathcal{O}} \cap \boldsymbol \gamma^{\mathcal{R}}$\\
		\STATE Find the optimal solution $\boldsymbol \gamma^*$ with the largest $\gamma_{\s}\in \boldsymbol \gamma^{\mathcal{F}}$ based on Theorem \ref{theorem:2} \\
		\STATE According to the obtained optimal threshold $\boldsymbol \gamma^*$, we classify users into either CoMP or non-CoMP transmissions from $\eqref{eq:du_assignment}$ \\
		\STATE CM-based CoMP is performed once the user is served by $|\mathcal{C}| > 1$ DUs; otherwise, DM-based non-CoMP is executed when the user is served by a single DU, i.e, $|\mathcal{C}|=1$
	\end{algorithmic}
\end{algorithm}

	\begin{table}[t]
	\footnotesize
		\centering
		\caption {System Parameters}
			\begin{tabular}{lll}
				\hline
				Parameters & Symbol & Value\\ \hline \hline
				Network service coverage & $\Psi$ & $1\times1$ km$^2$\\
				Density of DUs & $\Lambda_{\D}$ & $20$ DUs/km$^2$\\
				Arrival rate of URLLC, eMBB & $\lambda_{\U},\lambda_{\e}$ & $10,100$ packets/s\\
				Total processing rate of CU & $S_{\C}$ & $10^{9}$ cycles/s\\
				CPU processing of URLLC & $l_{\U}$ & $2500$ cycles/packet\\
				CPU processing for eMBB & $l_{\e}$ & $50000$ cycles/packet\\
				Pathloss exponent & $\alpha$ & $4$ \\
				DU Transmit antenna gain & $G_{c}$, $G_{c'}$ & $\{0, 20\}$ dBi\\
				Noise power & $\sigma^2$ & $-90$ dBm\\
				Delay requirement of URLLC & $t_{\U}^{(max)}$ & $1$ ms\\
				Threshold of reliability constraint & $\delta_{\U}$ & $0.99999$\\
				Successful transmission threshold& $T$ & $0$ dB\\
				Duration of TTI & $t_{\tti}(\s)$ & $0.0625$ ms\\
				Duration of retransmission & $t_{\re}$ &  $0.1$ ms\\
				\hline
			\end{tabular} \label{syspara}
	\end{table}

\section{Performance Evaluation}
\label{SIM_ANA}

We evaluate the system performance by conducting simulations, where the system parameters are listed in Table \ref{syspara}. We consider a network with the service coverage of $1 \times 1 \text{ km}^{2}$. We deploy a single CU connected to the DUs with the density of $20 \text{ DUs/km}^{2}$. The traffics of URLLC and eMBB services are defined as their respective packet arrival rates of $\lambda_{\U}=10$ and $\lambda_{\e}=100$ packets/s. The required CPU processing rate for URLLC and eMBB traffics are given by $l_{\U}=2500$ and $l_{\e}=50000$ CPU cycles/packet, respectively. The powerful CU possesses the total processing rate of $S_{\C}=10^{9}$ cycles/s. Without loss of generality, the CU with more data to be handled compared to DUs requires higher processing rate than DUs \cite{delay_constrained_fs}. Therefore, we define the DU's processing rate of $S_{\D}=\eta S_{\C}$, where $\eta \in(0,1)$ is the efficiency factor. We consider that under CM mode enabled by CoMP, the user can be served by multiple DUs, whilst the user is served by a single under a non-CoMP DM mode. According to the definition of URLLC requirement \cite{multiservice}, the reliability of URLLC is defined as the success probability of packet transmissions within $t_{\U}^{(max)}=1$ ms to be higher than $\delta_{\U}=0.99999$. The SINR threshold for successful transmission is set as $T=0$ dB. The duration of TTI is set to $t_{TTI}(\s)=0.0625$ ms as specified in \cite{num}, which is regarded as an appropriate value for the lowest slot duration especially for URLLC service. The retransmission time is set to be $t_{\re}=0.1$ ms. Note that the main burden falls in the packet processing, retransmission as well as control overhead of CoMP, which possess comparably larger latency than TTI. In the followings, we evaluate the proposed CFSMA mechanism in the proposed novel FFS framework for mixed services enabled by PS scheduling and CoMP transmissions. We will elaborate the performance validation of FFS-enabled CoMP, effect of processing capability of CU/DUs, effect of different densities of mixed serving users, and outage performance of proposed FFS-based network.

\begin{figure}[!t]
\begin{minipage}{0.49\textwidth}
	\centering
		\includegraphics[width=3in]{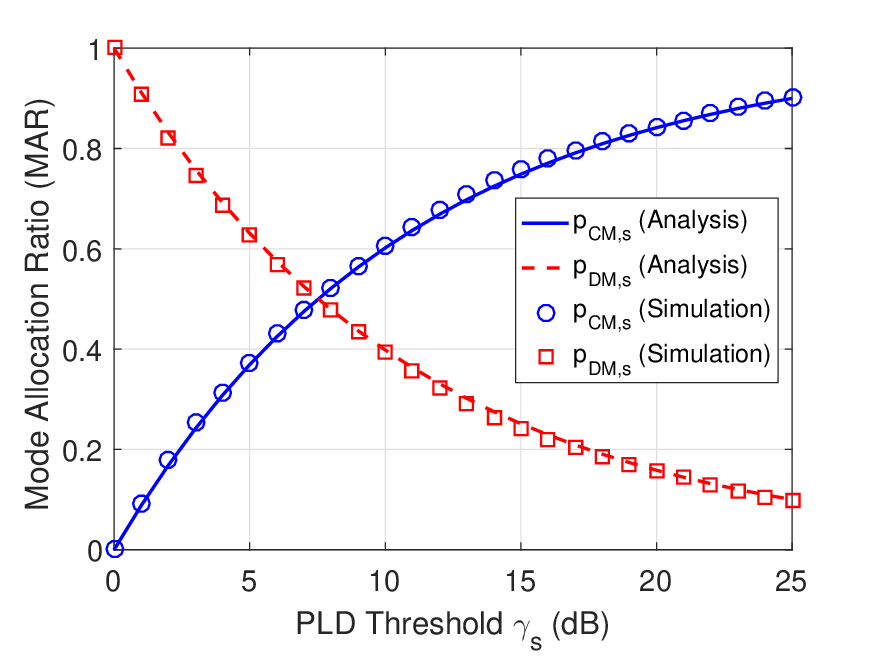}
		\caption{The results of simulation and theoretical analysis in terms of MAR considering different PLD thresholds under FFS-enabled CoMP transmission.} 		\label{fig:pld2ratio}
\end{minipage}
\quad
\begin{minipage}{0.49\textwidth}
	\centering
		\includegraphics[width=3in]{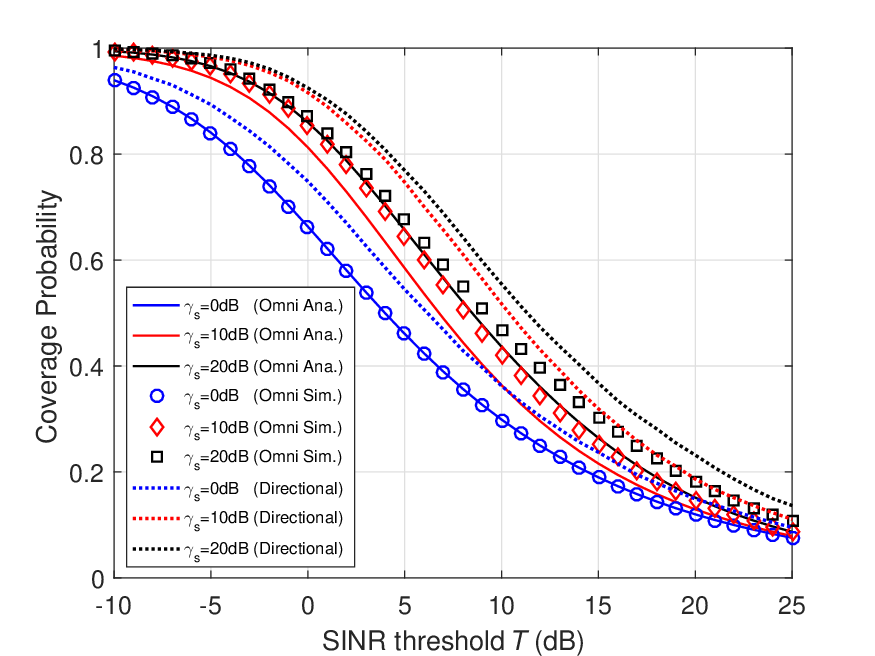}
		\caption{The performance of coverage probability of simulation and theoretical analysis considering different SINR and PLD thresholds under FFS-enabled CoMP transmission using omni/directional antennas.}
		\label{fig:coverage rate}
\end{minipage}		
\end{figure}

	\begin{figure}[!t]
		\centering
		\includegraphics[width=3in]{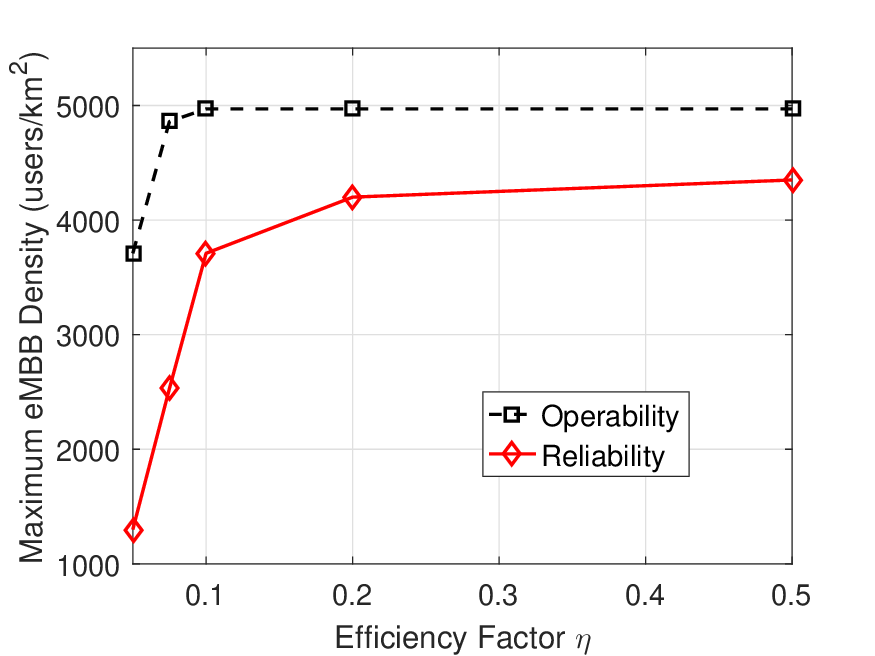}
		\caption{The maximum allowable eMBB density of mixed services restricted by operability and reliability requirements considering different efficiency factors of DUs of $\eta=\{0.05, 0.075,0.1,0.2,0.5\}$ and fixed URLLC user density of $\Lambda_{\U}=5000$ users/km$^{2}$.}
		\label{fig:maximum}
	\end{figure}

\subsection{Performance Validation of FFS-Enabled CoMP}

	In Fig. \ref{fig:pld2ratio}, we first validate the MAR results of FFS-enabled CoMP scheme considering various PLD thresholds $\gamma_{\s}\in \left[ 0 ,25\right]$ dB. Note that CoMP occurs when two nearest DUs are simultaneously serving users under CM mode, whilst non-CoMP transmission is applied for DM mode, where the probability of MAR of FS mode is derived from \eqref{eq:coverage}. We can observe that the simulation results have validated the theoretical analysis under both DM and CM modes. Furthermore, with increments of $\gamma_{\s}$, we have monotonically escalating MAR of CM $p_{\CM,\s}$ and decreasing DM of $p_{\DM,\s}$ based on property $\eqref{operable}$, i.e., $p_{\CM,\s}+p_{\DM,\s}=1$. This is because higher PLD threshold leads to a relaxation of finding two feasible nearest CoMP DUs according to $\eqref{eq:du_assignment}$, i.e., more CoMP DUs can be activated to serve users with high rate demands which potentially provokes higher MAR of CM probability.
	
	As depicted in Fig. \ref{fig:coverage rate}, FFS-enabled CoMP coverage probability is performed considering different SINR thresholds $T\in \left[-10, 25 \right]$ dB and PLD thresholds $\gamma_{\s}=\{0,10,20\}$ dB using omni-directional (unit gain) and directional antennas with $20$ dBi. Note that $\gamma_{\s}=0$ is regarded as the network enabled by non-CoMP transmission. We can observe that the simulation results of FFS-enabled CoMP asymptotically approaches our theoretical analysis derived in Propositions \ref{prop1} and \ref{prop2} and Theorem \ref{theorem:2}. It can also be seen that a lower serving coverage probability is acquired with more stringent SINR threshold. Moreover, we have higher coverage probability under higher PLD threshold $\gamma_{\s}$ due to more DUs involved in CoMP transmissions. We can have around $0.1$ to $0.15$ coverage probability enhancement by comparing directional antennas to omni-antenna. It is worth mentioning that a small difference between simulation and analysis is found under CoMP technique with $\gamma_{\s}=\{10,20\}$ dB. This is due to the reason that the covered finite interference is generated compared to the assumption of infinitely covered interference in the theory of stochastic geometry.

\subsection{Effect of Processing Capability}

	As demonstrated in Fig. \ref{fig:maximum}, we evaluate the proposed FFS-enabled CoMP network for mixed services in terms of the maximum allowable eMBB service density constrained by operability and reliability with different efficiency factors $\eta=\{0.05, 0.075,0.1,0.2,0.5\}$ and fixed URLLC user density of $\Lambda_{\U}=5000$ users/km$^{2}$. Note that system operability is defined in $\eqref{eq:st2}$, whilst the reliability constraint is according to $\eqref{eq:st1}$ under operable solution set of $\eqref{eq:st2}$. We can observe from Fig. \ref{fig:maximum} that with higher $\eta$, the system can support higher number of eMBB users, i.e., more powerful DUs having higher processing rate can alleviate the traffic loads and deal with high service demands. We can also know that the operable FFS-enabled system reaches its service limit when $\eta=0.1$ since the the sojourn time is substantially mitigated at DUs with sufficiently high processing rates. Under $\eta=0.5$, it can achieve the service number of around $5000$ eMBB users considering only operability constraint. Furthermore, with stringent service under reliability restriction, the FFS-enabled system is capable of supporting about $4400$ users. It can be induced that the processing capability have a compelling impact for reliability-aware service, i.e., low processing rate of DU potentially leads to longer E2E delay and corresponding low reliability.
	
\begin{figure*}[!t]
		\centering
		\subfigure[]
		{\includegraphics[width=2 in]{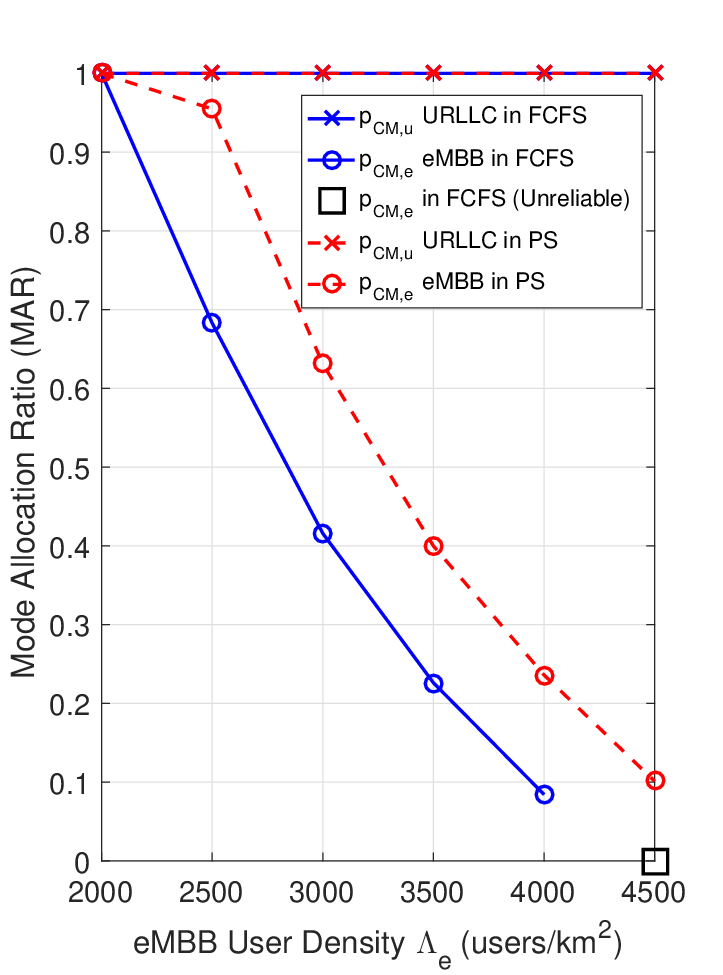} \label{fig:embb1}}
		\subfigure[]
		{\includegraphics[width=2 in]{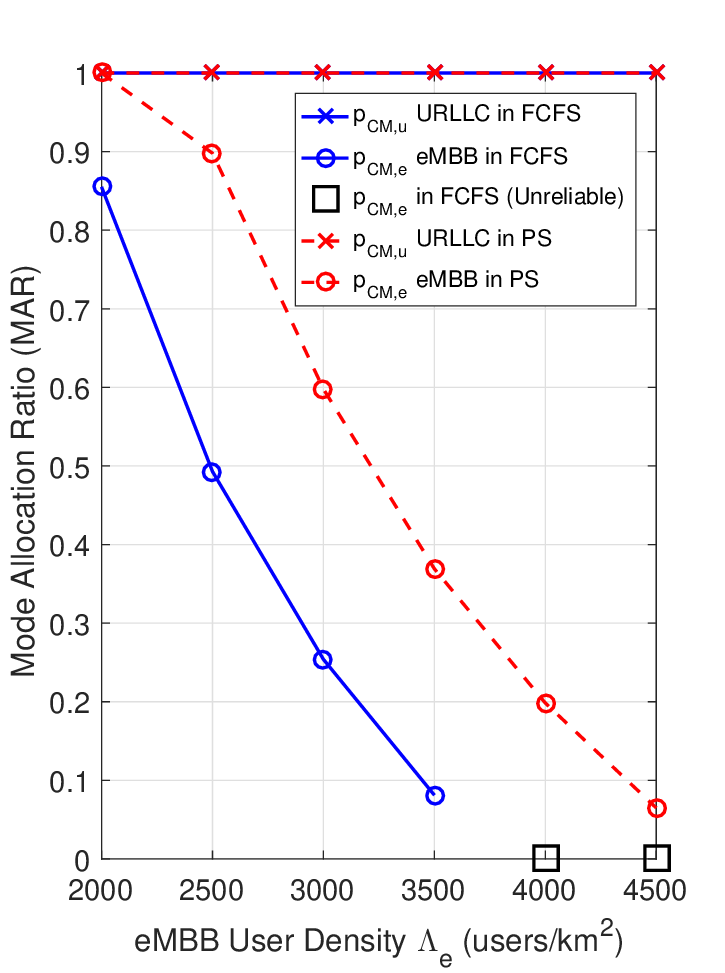} \label{fig:embb2}}
		\subfigure[]
		{\includegraphics[width=2 in]{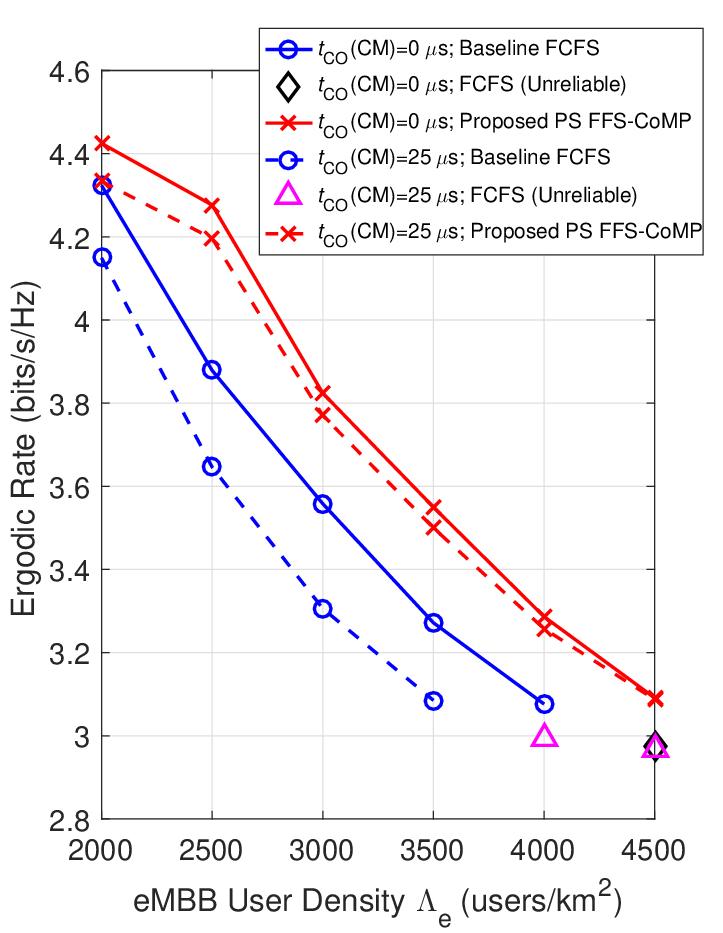} \label{fig:embb3}}
		\caption{The performance results of MAR and ergodic rates in an FFS-enabled CoMP network employing PS mechanism compared to the FCFS baseline under DU processing efficiency factor $\eta=0.2$ and URLLC user density of $\Lambda_{\U}=5000$ users/km$^{2}$. (a) MAR without CM overhead $t_{\co}(\CM)=0\ \mu$s (b) MAR with CM overhead $t_{\co}(\CM)=25 \ \mu$s (c) Ergodic rate.}
		\label{fig:embb_total}
	\end{figure*}	 

\begin{figure*}[t]
		\centering
		\subfigure[]
		{\includegraphics[width=2 in]{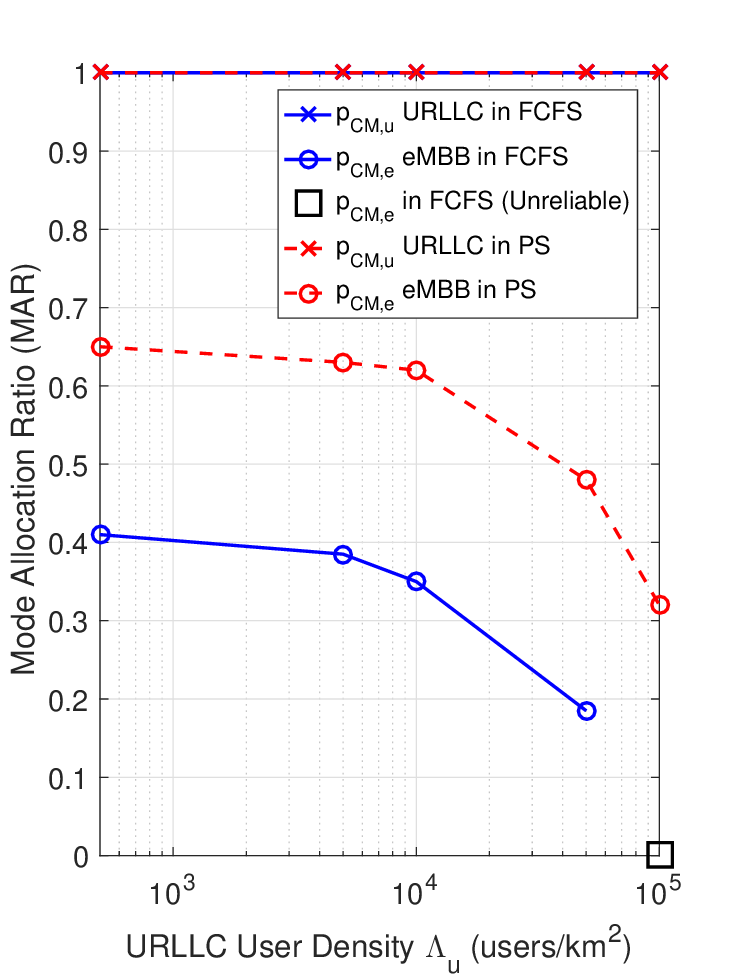} \label{fig:urllc1}}
		\subfigure[]
		{\includegraphics[width=2 in]{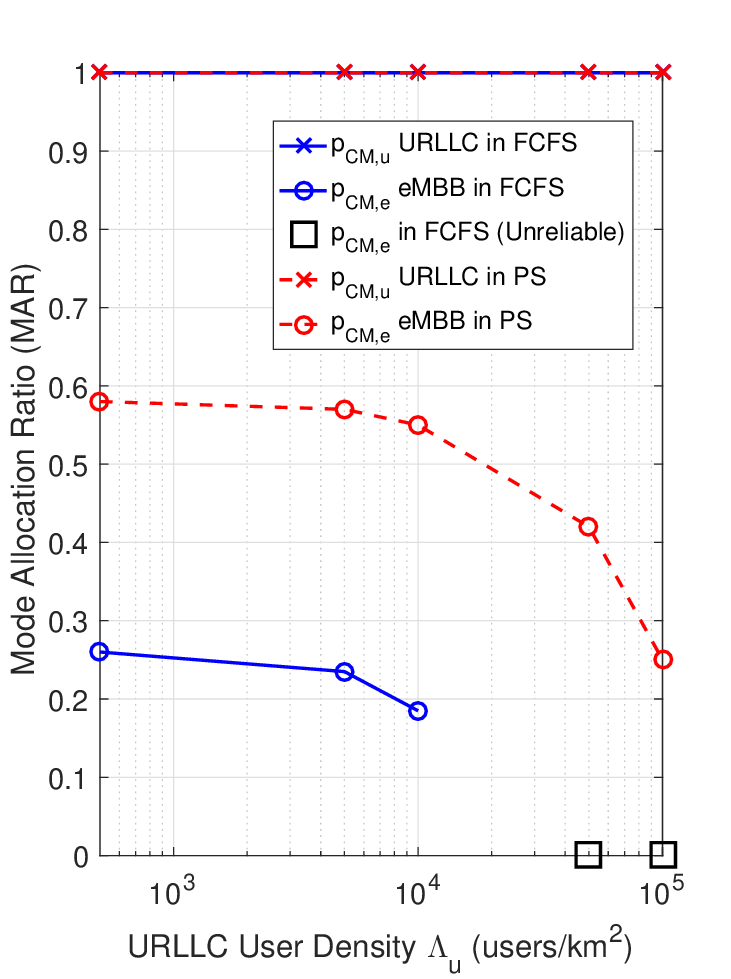} \label{fig:urllc2}}
		\subfigure[]
		{\includegraphics[width=2 in]{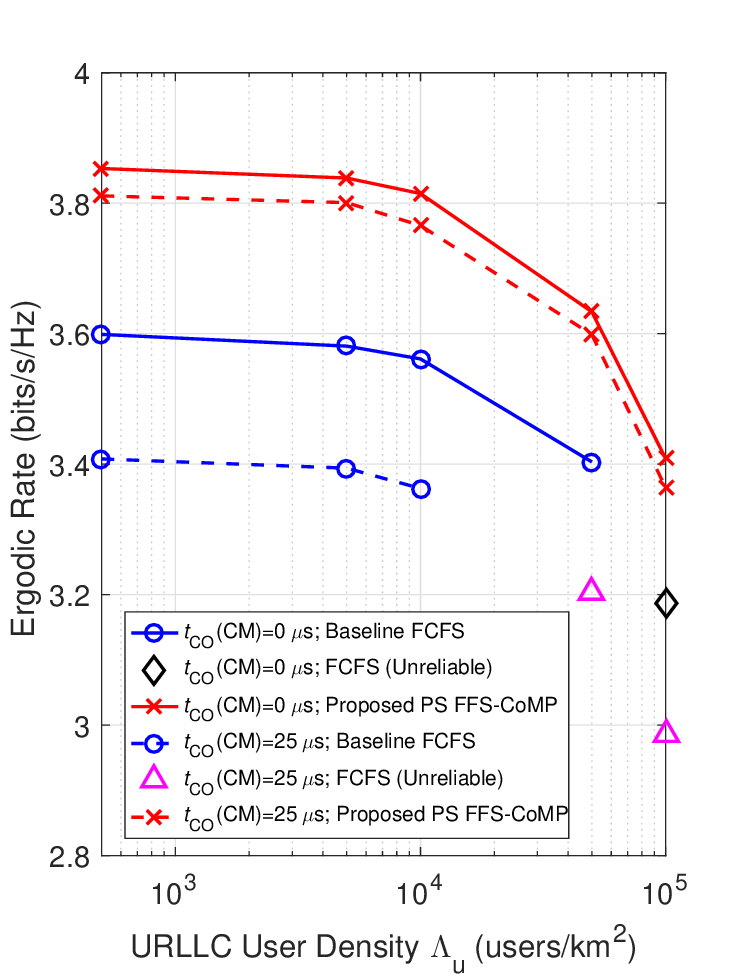} \label{fig:urllc3}}
		\caption{The performance results of MAR and ergodic rates in an FFS-enabled CoMP network employing PS mechanism compared to the FCFS baseline under DU processing efficiency factor $\eta=0.2$ and eMBB user density of $\Lambda_{\e}=3000$ users/km$^{2}$. (a) MAR without CM overhead $t_{\co}(\CM)=0\ \mu$s (b) MAR with CM overhead $t_{\co}(\CM)=25\ \mu$s (c) Ergodic rate.}
		\label{fig:urllc_total}
	\end{figure*}

\subsection{Effect of User Densities Under Mixed Services}

	As depicted in Figs. \ref{fig:embb_total} and \ref{fig:urllc_total}, we evaluate the performance of MAR ratio and ergodic rate in an FFS-enabled CoMP networking employing PS mechanism under DU processing efficiency factor of $\eta=0.2$ with different eMBB and URLLC user densities, respectively. We conduct simulations in terms of varying control overhead of CM mode and compare our proposed PS-based FFS framework to the baseline of conventional FCFS method. As demonstrated in Fig. \ref{fig:embb_total}, we consider fixed URLLC user density as $\Lambda_{\U}=5000$ users/km$^{2}$. We can observe from Figs. \ref{fig:embb1} and \ref{fig:embb2} that the MARs of $p_{\CM,\U}$ of both FCFS and PS mechanisms for URLLC services are identically flat and at their maximal values. This is due to the reason that the proposed FFS-enabled system should be operated at full CM mode for guaranteeing all prioritized URLLC services by satisfying the reliability constraint and offloading some eMBB services to DM mode. Accordingly, as seen from both figures, the MAR of $p_{\CM,\e}$ for eMBB users is monotonically decreasing with the increment of eMBB user density. In other words, it results in a full DM mode using conventional non-CoMP transmission for eMBB users under an overloaded network reaching its limits of processing and communication resources. By comparing the scenarios without and with signalling overhead respectively in Figs. \ref{fig:embb1} and \ref{fig:embb2}, it can be inferred that the PS-FFS-enabled network is capable of providing more CM-based eMBB services under low signalling overhead due to its low E2E delay derived in $\eqref{delay}$, e.g., around $1920$ eMBB users in $\Lambda_{\e}=3000$ users/km$^{2}$ are served under PS-enabled CM-based CoMP transmission which provides $120$ more users with high-quality services compared to that with high CM overhead. Moreover, as a benefit of simultaneous processing capability, the PS-enabled network possesses higher MAR values to support more CM-based services compared to that using conventional FCFS method, i.e., the FCFS-based network becomes unreliable when there exist more than $4500$ and $4000$ eMBB users without and with control overhead of CM as respectively shown in Figs. \ref{fig:embb1} and \ref{fig:embb2}. This is because that FCFS processing is mostly occupied by prioritized URLLC services with small amount of available processing resources assigned to eMBB users, which induces an unreliable system under overloaded eMBB services. Moreover, as observed from Fig. \ref{fig:embb3}, the ergodic rate performance becomes decreased with the escalating density of eMBB users due to low MAR $p_{\CM,\e}$ under compellingly increased traffic loads of an FFS network. The ergodic rate saturates when $\Lambda_{\e}\geq 4000$ users/km$^{2}$ since the system reaches its limits to provide resources for eMBB users. The higher CM overhead will lead to lower rate performance since it has higher E2E delay which requires more powerful processing capability to maximize eMBB rate while sustaining the reliability of URLLC services.

	As shown in Fig. \ref{fig:urllc_total}, we evaluate MAR and ergodic rate of different URLLC user densities of $\Lambda_{\U}=\{5\times10^{3}, 10^{4}, 5\times10^4, 10^4, 5\times10^4, 10^5\}$ users/km$^{2}$ considering fixed eMBB user density as $\Lambda_{\e}=3000$ users/km$^{2}$. As observed from Figs. \ref{fig:urllc1} and \ref{fig:urllc2}, the MAR values of URLLC in both FCFS and PS mechanisms are $p_{\CM,\U}=1$ since the reliability of prioritized URLLC services is guaranteed by compromising resources from eMBB users. Therefore, it has demonstrated the similar results as that in Figs. \ref{fig:embb1} and \ref{fig:embb1} that MAR of eMBB $p_{\CM,\e}$ decreases with compelling traffic loads of either eMBB or URLLC service. Beneficial to FFS-enabled network, the PS mechanism is capable of providing higher MAR to eMBB users, i.e., CoMP transmission is activated more frequently with better signal quality compared to that of FCFS method. Moreover, FCFS is unable to offer resilient services, which leads to an unreliable system when $\Lambda_{\U}=10^{5}$ users/km$^{2}$ without CM overhead and when $\Lambda_{\U}= 5\times 10^{4}$ users/km$^{2}$ with CM overhead. Furthermore, as depicted in Fig. \ref{fig:urllc3}, we can observe that the ergodic rate performance deteriorates with increasing traffic loads of URLLC services and higher CM overhead due to limited processing and communication resources. Additionally, under $t_{\co}(\CM)=25 \ \mu$s, the unreliable system utilizing FCFS scheduling has substantially low rate of around $3$ bits/s/Hz compared to that of around $3.4$ bits/s/Hz in proposed PS-enabled FFS-CoMP network. To summarize, the proposed PS-enabled FFS-CoMP network achieves the highest rate performance due to outstanding scheduling capability and resilient processing and traffic assignment benefited from flexible functional split operation.

\begin{figure*}[!t]
		\centering
		\subfigure[]
		{\includegraphics[width=2 in]{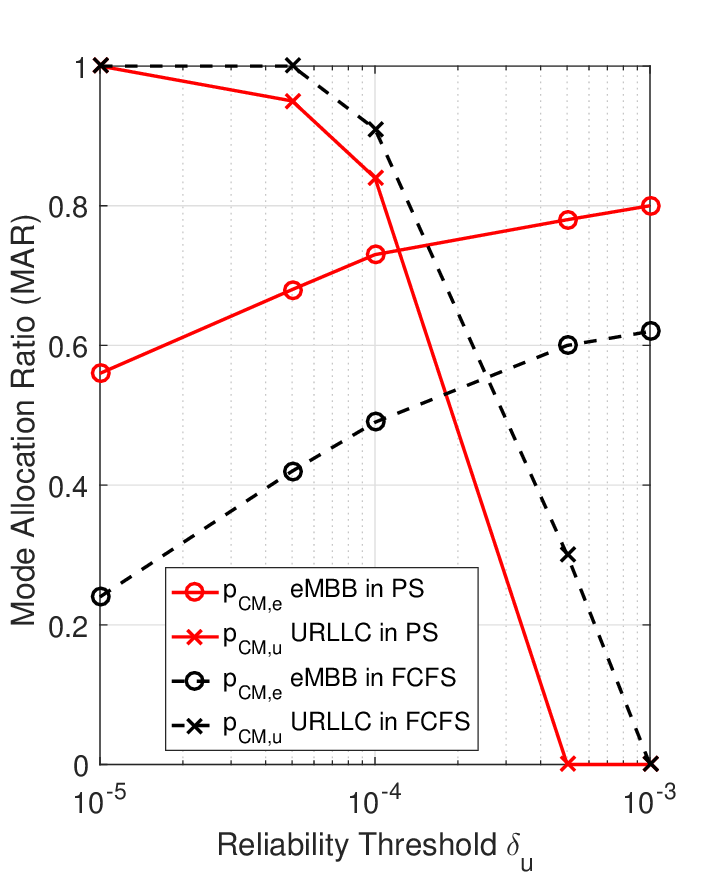} \label{fig:relia1}}
		\subfigure[]
		{\includegraphics[width=2 in]{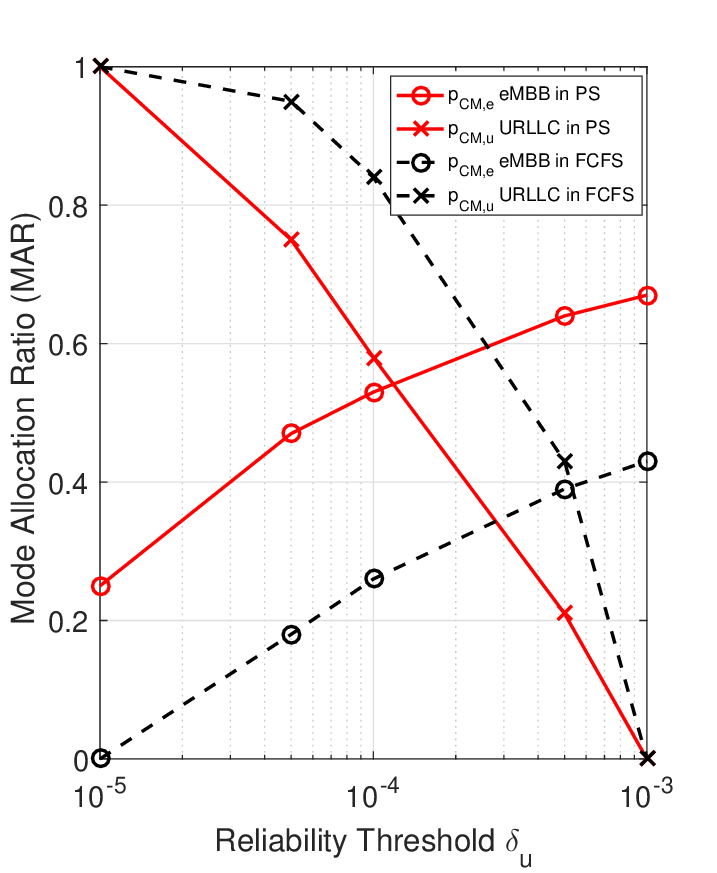} \label{fig:relia2}}
		\subfigure[]
		{\includegraphics[width=2 in]{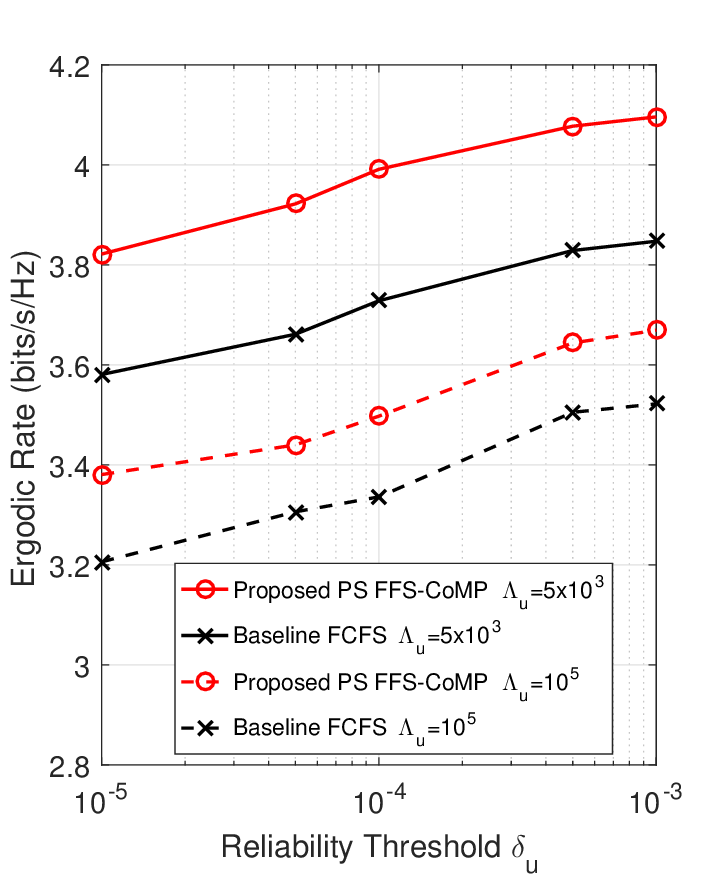} \label{fig:relia3}}
		\caption{The performance results of different reliability requirements of $\delta_{\U}=\{10^{-5}, 5\times 10^{-5}, 10^{-4}, 5\times10^{-4}, 10^{-3}\}$ under a PS-enabled FFS-CoMP network compared to FCFS considering DU processing efficiency factor $\eta=0.2$, eMBB user density of $\Lambda_{\e}=3000$ users/km$^{2}$, and CM overhead $t_{\co}(\CM)=25 \mu$s. (a) MAR with $\Lambda_{\U}=5\times10^3$ URLLC users/km$^{2}$ (b) MAR with $\Lambda_{\U}=10^5$ URLLC users/km$^{2}$ (c) Ergodic rate.}
		\label{fig:relia_total}
	\end{figure*}

\subsection{Reliability Thresholds}

In Fig. \ref{fig:relia_total}, we simulate the performance considering different reliability requirements of $\delta_{\U}=\{10^{-5}, 5\times 10^{-5}, 10^{-4}, 5\times10^{-4}, 10^{-3}\}$ for a PS-enabled FFS-CoMP network with efficiency factor $\eta=0.2$, $\Lambda_{\e}=3000$ eMBB users/km$^{2}$ and CM overhead $t_{\co}(\CM)=25 \ \mu$s. As observed from Figs. \ref{fig:relia1} and \ref{fig:relia2}, it can be deduced that smaller MAR values of CM are acquired for URLLC services due to relaxed reliability constraints, i.e., full DM mode is sufficient to guarantee reliability of URLLC users with $\delta_{\U}=10^{-3}$. Accordingly, with more processing and communication resource remained, more eMBB users can be served under CM mode using CoMP transmission. Moreover, benefited by proposed PS scheduling, URLLC services with small-sized packets can be instantly processed, which provides higher MAR values for eMBB users than that of inflexible FCFS method. We can also see that fewer eMBB users are served in CM mode with lower MAR ratios when a large number of URLLC users are queued. For example, by observing the PS method from Figs. \ref{fig:relia1} and \ref{fig:relia2}, a degradation from $p_{\CM,\e}=0.8$ to $p_{\CM,\e}=0.67$ takes places under $\delta_{\U}=10^{-3}$ for the increased number of URLLC users from $\Lambda_{\U}=5\times10^3$ to $\Lambda_{\U}=10^5$ users/km$^{2}$. Therefore, as demonstrated in Fig. \ref{fig:relia3}, more URLLC users potentially lead to lower ergodic rate performance due to congested traffic and limited processing capability. However, with less stringent service restrictions, higher rate can be achieved by providing higher MAR in CM mode for eMBB users while guaranteeing URLLC reliability. Moreover, owing to its advantages of flexible functional split mode allocation, the proposed PS-based FFS-CoMP network achieves higher ergodic rate compared to the baseline of FCFS.

\subsection{Outage Performance of FFS-based Network}

	Fig. \ref{fig:CO} depicts the impact of CM control overhead on both E2E delay outage and URLLC MAR $p_{\CM,\U}$ in the proposed PS-enabled FFS-CoMP network, where relevant parameters are set as $\eta=0.2$, $\Lambda_{\U}=5000$ and $\Lambda_{\e}=3000$ users/km$^{2}$. We consider the case of fixed CM overhead in Fig. \ref{fig:CO1} and varying CM overhead in Fig. \ref{fig:CO2}, where we define $\epsilon=\{0,0.25,0.5,0.75,1\}$ as the ratio of two CM overhead of $t_{\co}(\CM)=\{25,100\}$ $\mu$s, i.e., $\epsilon=1$ means that the FFS system has the only high overhead of $100$ $\mu$s. 
For the case of fixed CM overhead in Fig. \ref{fig:CO1}, we can deduce that higher CM overhead will lead to higher delay outage due to longer CM-based CoMP processing time. However, it can be seen that URLLC MAR of CM is one when $t_{\co}(\CM)\leq 75$ $\mu$s, whilst zero MAR is derived when $t_{\co}(\CM)=\{100,125\}$ $\mu$s. This is due to the reason that under low-overhead the system tends to quickly process the prioritized URLLC services. Nevertheless, for the high-overhead scenario, arbitrary assignment of any CM mode will guarantee no reliability for URLLC services, and therefore it will allocate those traffic in DM mode resulting in a high delay outage. In Fig. \ref{fig:CO2}, we can observe that the case of varying control overhead reveals similar monotonic increasing outage as that in the fixed case. The small portion of high-overhead will substantially deteriorates the outage performance as found from $\epsilon=0$ to $\epsilon=0.25$. Moreover, it becomes unable to meet the URLLC requirement of $\delta_{\U}=1-10^{-5}$ when $\epsilon\geq 0.25$ due to compellingly high CM overhead. Due to the flexible assignment of an FFS-enabled network, the system is capable of offering better URLLC services with high MAR values under two types of CM overhead. However, with larger $\epsilon$, lower MAR is acquired to support near DM service with limited processing and communication resources.

	\begin{figure*}[!t]
		\centering
		\subfigure[]{
		\includegraphics[width=3in]{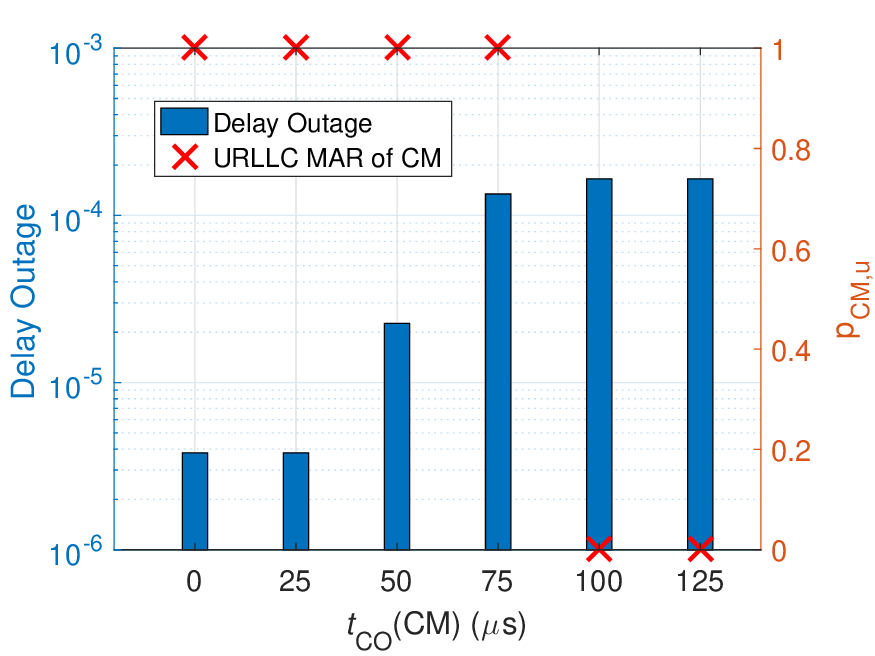} \label{fig:CO1}}
		\subfigure[]{
		\includegraphics[width=3in]{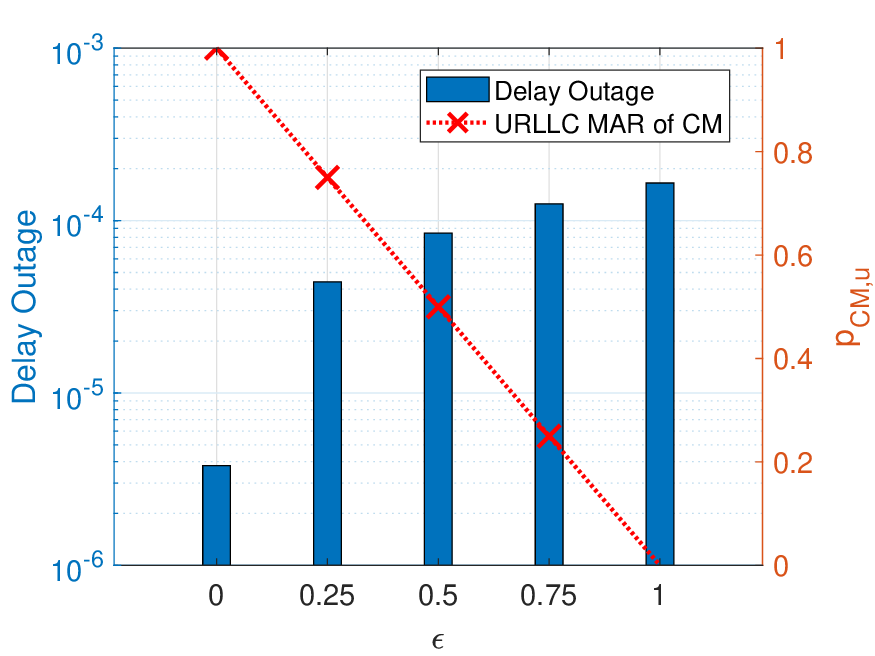} \label{fig:CO2}}
		\caption{The result of delay outage and URLLC MAR of CM in proposed PS-FFS-CoMP network considering varying CM control overhead with DU processing efficiency factor $\eta=0.2$, $\Lambda_{\U}=5000$ and $\Lambda_{\e}=3000$ users/km$^{2}$. (a) Fixed control overhead (b) Varying control overhead with $\epsilon=\{0,0.25,0.5,0.75,1\}$ for the ratio of $t_{\co}(\CM)=\{25,100\}$ $\mu$s.}
		\label{fig:CO}
	\end{figure*}	

	\begin{figure*}[!t]
		\centering
		\subfigure[]
		{\includegraphics[width=3in]{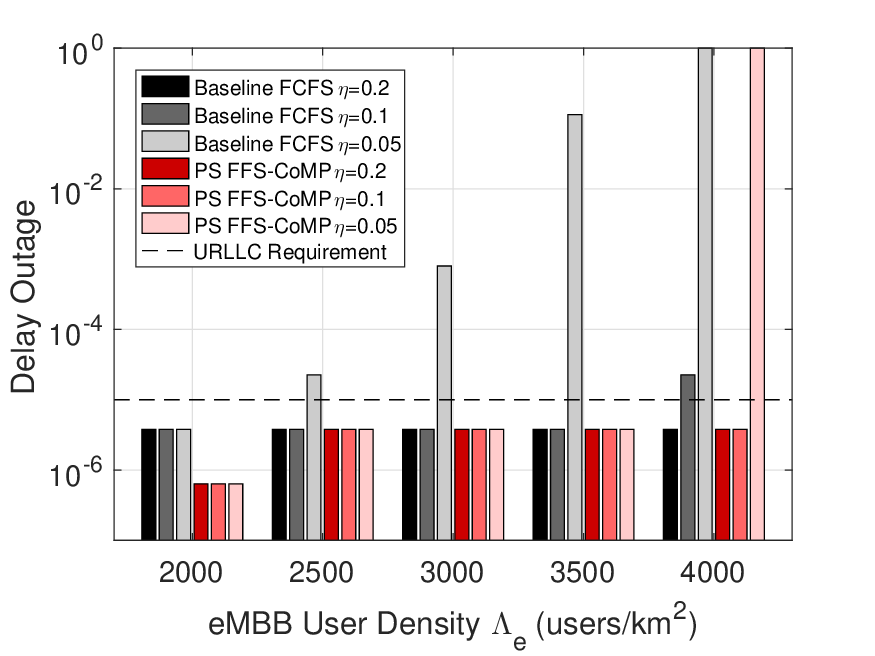}\label{fig:eff1}}
		\subfigure[]
		{\includegraphics[width=3in]{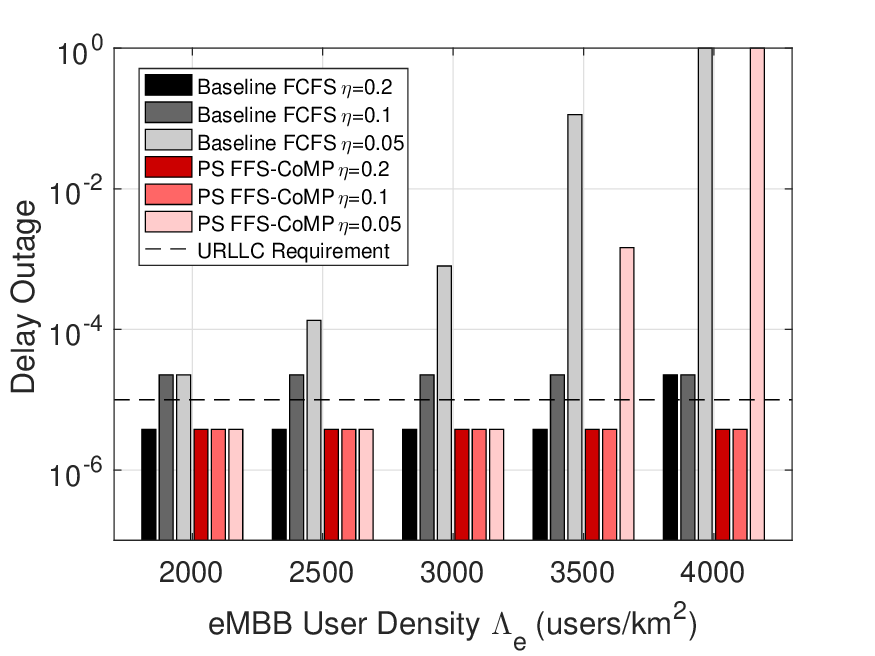} \label{fig:eff2}}
		\caption{The delay outage performance of proposed PS FFS-CoMP network in terms of different eMBB user densities and DU efficiency factors of $\eta=\{0.05,0.1,0.2\}$ with URLLC user density of $\Lambda_{\U}=5000$ users/km$^{2}$. (a) CM overhead $t_{\co}(\CM)=0$ $\mu$s (b) CM overhead $t_{\co}(\CM)=25$ $\mu$s.} \label{fig:eff}
	\end{figure*}

As shown in Fig. \ref{fig:eff}, we evaluate the delay outage performance of proposed PS FFS-CoMP network in terms of different eMBB user densities and DU efficiency factors of $\eta=\{0.05,0.1,0.2\}$ with URLLC user density of $\Lambda_{\U}=5000$ users/km$^{2}$ considering CM overhead of $t_{\co}(\CM)=\{0, 25\}$ $\mu$s in Figs. \ref{fig:eff1} and \ref{fig:eff2}, respectively. We can observe that the delay outage increases with the increments of eMBB traffic and lower DU processing capability. This is because when $\eta$ is compellingly smaller, the packet sojourn time in low-capability DU will become longer which deteriorates the delay performance. 
It is also known that the proposed framework can mostly satisfy the specified URLLC requirements of delay outage $10^{-5}$ with the low-overhead case, which is demonstrated in Fig. \ref{fig:CO1}. Nevertheless, for $t_{\co}(\CM)=25$ $\mu$s in Fig. \ref{fig:CO2}, the reliability constraint cannot be fulfilled once the DU processing capability is as low as $\eta=0.05$. However, benefited by simultaneous processing employing PS scheduling, the proposed PS FFS-CoMP network is able to instantly process the small-sized URLLC packets, which outperforms the inflexible FCFS method in terms of much lower delay outage.

	In Fig. \ref{fig:inst}, it reveals the delay outage performance difference between averaged optimized and instantaneous mechanisms. Note that the averaged scheme is derived and conducted as proposed in this paper, whilst the instantaneous policy considers time-slot based traffic and processing capability with both non-prioritized and priority-based scheduling. In non-prioritized use case, all URLLC users are equivalently assigned PS-enabled processing capability. On the contrary, in priority-based scheduling, a greedy-based method is adopted to firstly assign users with top-priority having more restricted latency requirements. As observed from Fig. \ref{fig:inst}, due to incapability of assigning processing, there is no performance difference for FCFS under either averaged or instant evaluation. Moreover, owing to quantization error from integer values of analysis of failure re-transmissions, instantaneous policy of non-prioritized scheduling has a slightly lower delay outage, which potentially reflects more realistic traffics and scenarios. Benefited by priority-based scheduling, it is capable of providing resilient PS allocation for eMBB users as well as URLLC users with urgent needs, resulting in the lowest delay outage.
	
\begin{figure}[!t]
		\centering
\includegraphics[width=3in]{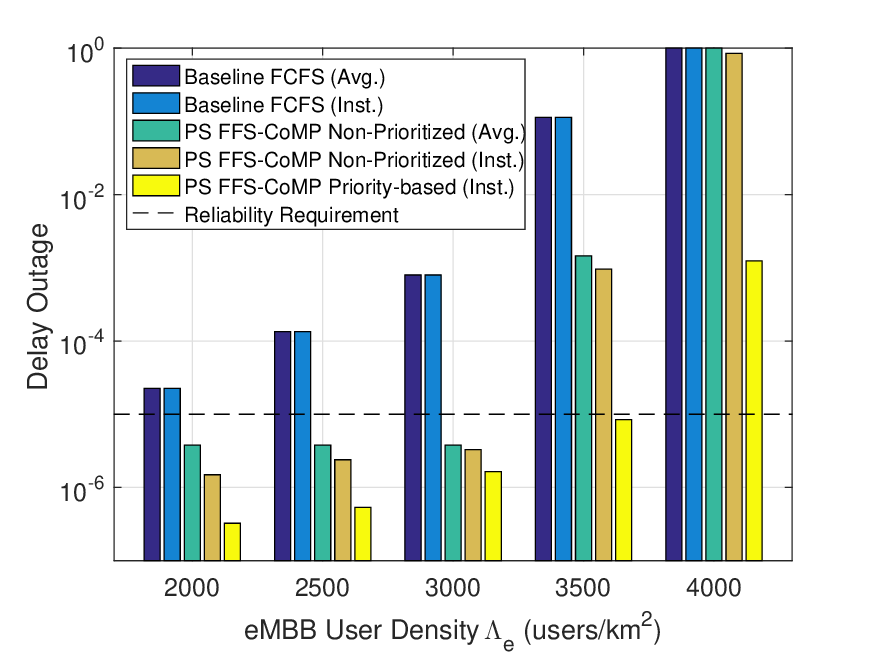}
		\caption{The delay outage of proposed PS FFS-CoMP network and baseline FCFS under averaged optimized (Avg.) and instantaneous (Inst.) cases considering $\eta=0.05$, $t_{\co}(\CM)=25$ $\mu$s, and $\Lambda_{\U}=5000$ users/km$^{2}$.} \label{fig:inst}
	\end{figure}
	
\subsection{Benchmark Comparison}

	\begin{figure*}[!t]
		\centering
		\subfigure[]
		{\includegraphics[width=3in]{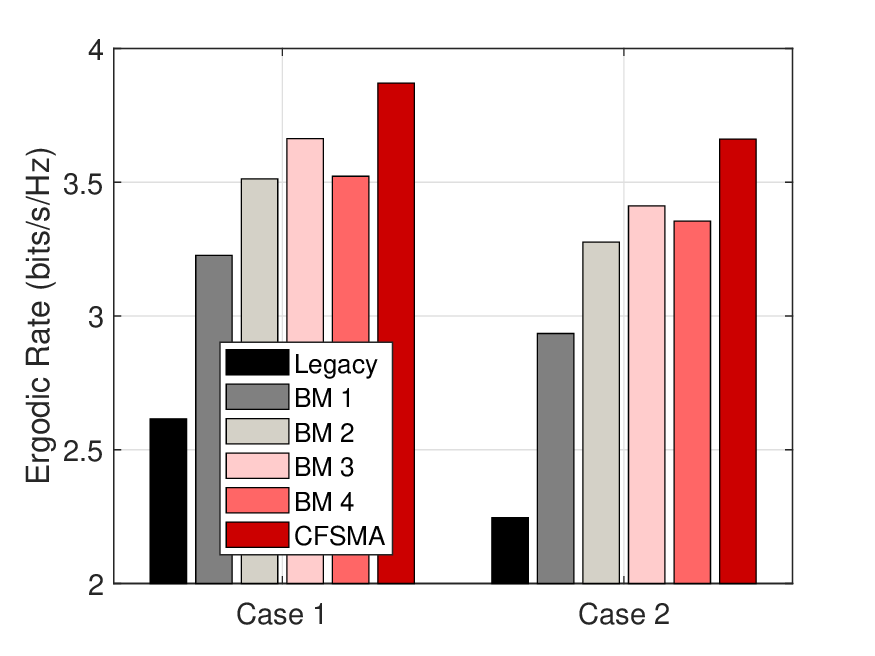}\label{ben1}}
		\subfigure[]
		{\includegraphics[width=3in]{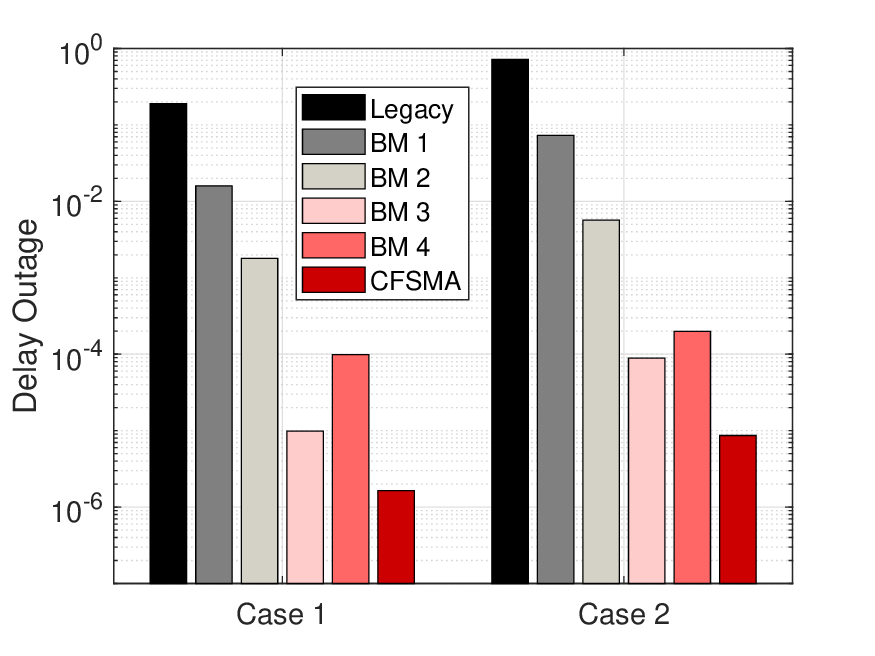} \label{ben2}}
		\caption{The performance comparison of proposed CFSMA scheme in PS FFS-CoMP and benchmarks of legacy network and different fixed functional split options in terms of eMBB user densities $\Lambda_{\e}=3000$ users/km$^{2}$ and DU efficiency factors of $\eta=0.2$ and CM overhead $t_{\co}(\CM)=25$ $\mu$s. Case 1 stands for URLLC user density of $\Lambda_{\U}=5\times10^3$ URLLC users/km$^{2}$, whereas Case 2 is for $\Lambda_{\U}=10^5$ URLLC users/km$^{2}$. (a) Ergodic rate (b) Delay outage.} \label{ben}
	\end{figure*}

In Fig. \ref{ben}, we have conducted the rate and delay outage performances of proposed CFSMA scheme in PS FFS-CoMP, which is compared with the benchmarks (BMs) in open literature of legacy network and different fixed functional split options in terms of eMBB user densities $\Lambda_{\e}=3000$ users/km$^{2}$ and DU efficiency factors of $\eta=0.2$ and CM overhead $t_{\co}(\CM)=25$ $\mu$s. We consider two cases: Case 1 stands for sparse URLLC users with $\Lambda_{\U}=5\times10^3$ URLLC users/km$^{2}$, whereas Case 2 is for dense URLLC users with $\Lambda_{\U}=10^5$ URLLC users/km$^{2}$. The description of benchmarks is listed as follows: 
\begin{itemize}
\item \textbf{Legacy} network in \cite{r1, r2} indicates a conventional architecture with a BS containing all functions without any split options with all mixed traffics processed at BS under FCFS mechanism.
\item \textbf{BM 1} is deployed with fixed FSO 4 \cite{com} in only DM mode, which CoMP is inactivated due to high-MAC layer at DUs. Note that FCFS scheduling is applied.

\item \textbf{BM 2} in \cite{com2} is performed under fixed FSO 7.2 in pure CM mode with all traffics under CoMP-enabled network. Note that FCFS scheduling is utilized.

\item \textbf{BM 3} considers PS \cite{PS} scheduling under FSO 7.2.

\item \textbf{BM 4} in \cite{capacity_constrained_fs} has also designed a flexible FSO method but with limited function options. Note that FCFS is adopted in the paper.
\end{itemize}
We can observe from Fig. \ref{ben} that under mixed services the legacy network performs the worst rate and highest delay outage due to all traffics congested at BS. For BM 1, improved rate and delay are obtained due to FSO compared to the legacy with network functions flexibly separated to CU/DUs. However, as stated in \cite{tutorial}, few benefits of FSO 4 in BM 1 are demonstrated owing to closely relation between RLC and MAC functions. Therefore, BM 2 with option 7.2, deemed to be an appropriate candidate in most of recent works \cite{tutorial}, maintains high-PHY in CU with fewer functions operated at DUs. With CoMP activated under BM 2, ergodic rate can be further improved from $3.22$ to $3.51$ bits/s/Hz. Nonetheless, conventional FCFS method congests the latency-aware traffics inducing little improvement of delay outage, which is substantially enhanced by employing PS in BM 3 with a reduction from around $1.8\times 10^{-3}$ to $1\times 10^{-5}$ in sparse case. In BM 4, flexible FSOs are proposed with improved delay outage compared to the fixed option in BM 2. Note that BMs 2 and 4 are operated under FCFS method. On the contrary, BM 3 possesses a better performance than FFS of BM 4 thanks to the PS capable of jointly processing latency-/throughput-aware packets. Benefited by both FFS and PS, CFSMA capable of selecting appropriate functions and scheduling portions for CU/DUs accomplishes the highest ergodic rate and lowest delay outage compared to the other existing benchmarks in open literature. To elaborate a little further, considering fairness metric, the worst case happens in FCFS, whilst PS-based mechanism possesses higher fairness than FCFS since almost all services are available by sharing queues. However, fixed FSO performs moderate fairness due to either unsatisfaction of eMBB traffic in FSO 4 or delay outage of URLLC in FSO 7.2.

\section{Conclusion} \label{SEC_CON}
	In this paper, we conceive a novel FFS-enabled framework for mixed eMBB-URLLC services including PS-based scheduling for hybrid CM-based CoMP and DM-based non-CoMP transmissions. The tractable analysis of PS-FFS-CoMP for mixed services in terms of ergodic rate and delay performances is provided by the employment of stochastic geometry and queuing property. Aiming at maximizing eMBB ergodic rate while guaranteeing URLLC reliability and system operability, we have proposed a CFSMA scheme to flexibly assign user-centric FSOs of CM/DM modes for multi-services. Performance validation shows that our theoretical analysis of proposed PS-FFS-CoMP framework asymptotically approaches the simulation results. With higher CM overhead, it potentially provokes lower rate and higher outage performance due to higher E2E delay. Moreover, under compellingly high traffic loads, the system utilizing conventional FCFS method cannot fulfill reliability requirement of URLLC users. Benefited from simultaneous processing for multi-service traffic, the proposed PS-enabled FFS-CoMP network is capable of providing resilient and efficient service for URLLC users while providing more computing and communication resources for improving eMBB users, which outperforms the benchmarks of conventional FCFS scheduling, non-FSO network, fixed FSOs, and limited available FSO selections in open literature.

\bibliographystyle{IEEEtran}

\footnotesize
\bibliography{IEEEabrv}

\end{document}